\documentclass[smallextended]{svjour3}       
\smartqed  
\usepackage{graphicx,color,amsmath,amssymb}
\usepackage[percent]{overpic}

\newcommand{\pps}{p^{+}}
\newcommand{\pms}{p^{-}}
\newcommand{\gradS}{\frac{\partial S}{\partial x}}
\newcommand{\Der}{\mathrm{d}}
\newcommand{\R}{\mathbb{R}}
\newcommand{\pd}[2]{\dfrac{\partial#1}{\partial#2}}
\newcommand{\re}[1]{\mbox{$($\ref{#1}$)$}}
\newcommand{\ds}{\displaystyle}
\newcommand{\bv}{\mathbf{v}}
\newcommand{\dd}[2]{\dfrac{\mathrm{d} #1}{ \mathrm{d} #2}}
\newcommand{\q}{~\mbox{\hspace{2em}\frame{\rule{0ex}{1.5ex}\mbox{\hspace{1ex}}}}\vspace*{2ex}}
\numberwithin{equation}{section}
\newtheorem{thm}{Theorem}

\journalname{Submitted to Bulletin of Mathematical Biology}

\begin{document}
\title{Travelling waves in hybrid chemotaxis models}
\author{Benjamin Franz \and Chuan Xue \and Kevin J. Painter \and Radek Erban}

\authorrunning{Franz, Xue, Painter, Erban} 

\institute{Benjamin Franz \and Radek Erban \at
Mathematical Institute, University of Oxford \\
24-29 St. Giles', Oxford, OX1 3LB, United Kingdom \\
\email{franz@maths.ox.ac.uk, erban@maths.ox.ac.uk}
\and
Chuan Xue \at
Department of Mathematics, Ohio State University \\
231 West 18th Avenue, Columbus, OH 43210, USA \\
\email{cxue@math.osu.edu}
\and
Kevin J. Painter \at
Department of Mathematics, Heriot-Watt University \\
Edinburgh, EH14 4AS, United Kingdom \\
\email{K.J.Painter@ma.hw.ac.uk}
}

\date{Preprint version: \today}

\maketitle

\begin{abstract}
Hybrid models of chemotaxis combine agent-based models of cells
with partial differential equation models of extracellular
chemical signals. In this paper, travelling wave properties of
hybrid models of bacterial chemotaxis are investigated. Bacteria
are modelled using an agent-based (individual-based) approach with
internal dynamics describing signal transduction. In addition to
the chemotactic behaviour of the bacteria, the indivi\-dual-based
model also includes cell proliferation and death. Cells consume
the extracellular nutrient field (chemoattractant) which is
modelled using a partial differential equation. Mesoscopic and
macroscopic equations representing the behaviour of the hybrid
model are derived and the existence of travelling wave solutions
for these models is established. It is shown that cell
proliferation is necessary for the existence of non-transient
(stationary) travelling waves in hybrid models. Additionally,
a numerical comparison between the wave speeds of the
continuum models and the hybrid models shows good agreement
in the case of weak chemotaxis and qualitative agreement for
the strong chemotaxis case. In the case of slow cell adaptation,
we detect oscillating behaviour of the wave, which cannot be
explained by mean-field approximations.

\keywords{hybrid model \and travelling wave \and bacterial chemotaxis}
\end{abstract}

\section{Introduction}

The wavelike spread of cell populations plays a fundamental
role in many biological processes, including development 
\cite{Landman:2007:MEI},
wound healing \cite{Witte:1997:GPW} and tumour invasion
\cite{Gerisch:MMC:2010}. Bacterial populations show similar phenomena,
with the pioneering studies of Adler \cite{Adler:1966:CB} confirming 
the capacity
of an {\em {E. coli}} population to form travelling bands via chemotaxis
to extracellular signals. Mathematically, the extent to which chemotaxis can
generate and sustain {\em {stationary}} travelling bands has motivated 
a number of
studies, including the Keller-Segel model of Adler's experiments 
which is written in the form of coupled partial differential
equations (PDEs) \cite{Keller:1971:TBC}. This early model necessitated 
a biologically
unrealistic singularity in the chemotactic sensitivity to generate 
stationary travelling
waves: a requirement that allows bacteria behind the wave to
acquire infinite speeds and to avoid ``dropping-out'', an effect that 
leads to gradual dispersal of
the band \cite{Xue:2011:TWH,Franz:2012:HMI}.

This singularity requirement can be
circumvented by incorporating other processes. The well known Fisher's
equation \cite{Fisher:1937:WAA} demonstrates travelling waves in systems 
coupling
diffusion with logistic growth terms \cite{Fisher:1937:WAA}. Parabolic
chemotaxis models with non-singular sensitivities but incorporating 
either logistic \cite{Landman:2003:CCM,Landman:2005:DCD,Nadin:2008:TWK} 
or non-logistic \cite{Kennedy:1980:TWS,Satnoianu:2001:TWN} growth terms 
also admit travelling wave solutions. Other studies have
shown that introduction of more complex nutrient terms can give rise
to travelling waves, even when growth is absent 
\cite{Saragosti:2010:MDB,Saragosti:2011:DPC}.
An experimental system which also included two chemicals -- 
a chemo\-attractant and
a nutrient source -- was presented in 
\cite{Budrene:1991:CPF,Budrene:1995:DFS}, with
stationary or transient travelling waves obtained according to t
he formulation of the model \cite{Brenner:1998:PMC,Xue:2011:TWH}. 
Travelling waves in chemotaxis models have also
been recently studied in \cite{Li:2012:SPW,Li:2011:ANS}; we also 
note the articles
\cite{Horstmann:2004:UPK} and \cite{Wang:2013:MTW} for a review 
and analysis of travelling waves in PDE-based models.
A comparison between mesoscopic (hyperbolic) and macroscopic 
(parabolic) PDEs has been
presented in \cite{Lui:2010:TWS}.

Relatively little exploration has been conducted into travelling 
wave formation for chemotactic
models extending beyond PDE systems, in particular those introducing 
terms to account for
the inherent noise of biological systems. One exception is the study 
of \cite{Chavanis:2010:SKS},
in which a multiplicative noise term was introduced into the 
Keller-Segel model and the
existence of travelling waves has been demonstrated within this setting.
Hybrid models, in which an individual-based model for bacterial
behaviour is coupled to a continuum description of extracellular 
signals, naturally
introduce stochastic effects and will be the focus of the present paper.
Such a hybrid model was formulated in \cite{Franz:2012:HMI} where 
it was shown that
under finite cell speeds only transient travelling waves formed, even with
singular chemotactic sensitivity. The individual-based model was 
formulated in terms of the velocity-jump
model with internal dynamics \cite{Erban:2004:ICB,Erban:2005:STS,Xue:2009:MMT} 
and, in this paper,
we extend the model in \cite{Franz:2012:HMI} to incorporate proliferation 
and death
of bacteria. We analyse this system numerically and analytically with respect 
to its
travelling wave properties, employing the biologically inspired chemotactic 
sensitivity
presented in \cite{Xue:2011:TWH} and a linear growth term. We show that 
stationary
travelling waves can be observed even in the absence of chemotaxis, 
although wave
speeds are substantially increased in its presence.

The organisation of the paper is as follows: the full hybrid model 
is presented in Section~\ref{sec:hybrid} along with illustrative
simulation results, while the corresponding continuum equations 
are derived under certain
assumptions in Section~\ref{sec:continuum}; in Section~\ref{sec:travelling} 
these
continuum equations are analysed with respect to travelling wave properties; 
in Section ~\ref{sec:numerics}
where
a computational analysis and comparison of the models is presented; finally, 
we discuss our
observations in Section~\ref{sec:discussion}.

\section{Hybrid model of bacterial chemotaxis}
\label{sec:hybrid}
In this section we formulate the hybrid model of bacterial chemotaxis 
which will be investigated in
this paper. The model is motivated by the behaviour of the bacterium 
{\it E.coli} and, in its most
general form, includes cell movement, sensing and response to a chemical
signal, consumption of the chemoattractant, cell proliferation and death. 
However,
for analytical tractability, we will also explore simplified hybrid models 
which exclude some of
these processes. Bacteria are modelled as agents with internal dynamics 
that represent
the signal processing and response of each individual while the 
extracellular chemical is modelled
using a PDE to describe its spatio-temporal concentration. The 
mathematical framework and
simulation techniques are reviewed in \cite{Franz:2012:HMI}. We 
consider the model in
an effectively one-dimensional domain representing a long but 
narrow tube, similar to the
experimental set up considered in \cite{Adler:1966:CB}.

The motion of \emph{E. coli} bacteria is controlled through the 
coordinated rotation of
flagella distributed over the cell surface \cite{Berg:1975:HBS}. 
Counterclockwise
rotation generates a propulsive bundle that results in straight 
line motion of the bacterium
-- a so-called ``run'' \cite{Berg:1972:CEC}. Alternatively, clockwise
rotation results in the outward flaying of flagella and a ``tumble'' -- 
rotation with insignificant
displacement. At the end of each tumble the bacterium chooses a new 
direction of
movement, seemingly at random, and returns to the run phase. The 
lengths of
the individual phases are independent from each other and distributed 
exponentially, yet they
can be influenced by internal dynamics \cite{Berg:1975:HBS}.

Internal dynamics of the \emph{E. coli} bacteria possess two principal 
features
\cite{Bourret:1991:STP}: a quick excitation phase followed by slower 
adaptation.
Specifically, changes in the extracellular signal concentration lead 
to quick excitation of
the internal metabolism, signified through altered chemical concentrations 
inside the cell.
Following excitation the internal concentrations revert slowly to normal 
in an adaptation
process, even when the external signal remains at the raised level.

\subsection{Velocity jump model with internal dynamics}
\label{secveljump}
Run-and-tumble dynamics are aptly modelled as a velocity-jump process 
\cite{Othmer:1988:MDB,Erban:2004:ICB}. We denote by $N_a(t)$ the number of
bacteria (agents) in the system at time $t$. The current state of the 
$i$-th individual,
$i=1,2,\ldots,N_a(t),$ will be described using its position $x_i\in\R$, its
velocity $v_i=\pm s \in \R$ and a set of internal state variables
$\mathbf{y}_i \in \R^m$ that represent the states of components in the
intracellular signal transduction network.

Here we concentrate on a cartoon version of the internal
dynamics of bacteria written in terms of two internal variables
\cite{Othmer:1998:OCS,Erban:2004:ICB}, i.e $m=2$. Internal variables
$y^{(1)}$ and $y^{(2)}$ are governed by the equations
\begin{equation}
\begin{aligned}
	\frac{\Der y^{(1)}}{\Der t} &
	= \frac{S(x(t),t) - y^{(1)} - y^{(2)}}{t_e} \,,\\
	\frac{\Der y^{(2)}}{\Der t} &
	= \frac{S(x(t),t) - y^{(2)}}{t_a} \,,
\end{aligned}
\label{eq:internalDyn}
\end{equation}
where $t_e$ is the excitation time, $t_a$ is the adaptation time,
$t_e \ll t_a$ and $S(x(t), t)$ is the concentration of
chemoattractant at the position of the bacterium $x(t)$ at time $t$.
Furthermore, bacteria move with the velocity $v_i = \pm s$ governed
through a velocity jump process with a turning frequency
$\lambda = \lambda(\mathbf{y})$
that depends on the internal dynamics. In this paper, we will use the
biologically motivated nonlinear turning kernel developed in
\cite{Xue:2011:TWH}. Hence, the full model of one individual
over (a small) time step $\Delta t$ can be written as:
\begin{eqnarray}
\label{eq:hybridmodel:x}
&& x(t+\Delta t)  =  x(t) + v(t)\,\Delta t, \\
\label{eq:hybridmodel:v}
&& v(t+\Delta t)  =
\left\{\begin{array}{rl} -v(t), &\mbox{with probability } \,
\lambda(\mathbf{y}(t))\,\Delta t,
\\ v(t), &\mbox{otherwise}\,,
\end{array}\right.\\
\label{eq:hybridmodel:lambda}
&& \lambda(\mathbf{y}(t))  = \lambda_0 \left(1
- \frac{y^{(1)}(t)}{\kappa + |y^{(1)}(t)|}\right)\,,
\label{eq:lamhybrid}\\
\label{eq:hybridmodel:y1}
&& y^{(1)}(t+\Delta t)  =
y^{(1)}(t) + \frac{S(x(t), t) - y^{(1)}(t) - y^{(2)}(t)}{t_e} \,\Delta t,
\\
\label{eq:hybridmodel:y2}
&& y^{(2)}(t+\Delta t)  =
y^{(2)}(t) + \frac{S(x(t), t) - y^{(2)}(t)}{t_a} \,\Delta t,
\end{eqnarray}
where $\lambda_0$ and $\kappa$ are positive constants.

In addition to the behaviour of an individual bacterium we define a 
signal-dependent
proliferation function $h(S): \R^+ \mapsto \R$. We thereby interpret a
positive value of $h(S)$ as a proliferation rate, meaning that in the 
infinitesimal
interval $[t, t+\Delta t)$ a bacterium at position $x$ generates an exact
copy of itself with probability $h(S(x(t), t)) \, \Delta t$. Similarly, 
a negative value
of $h(S)$ means that the bacterium disappears (dies) with the probability
$- h(S(x(t), t)) \, \Delta t$. In this paper, we will use the following 
form for the
proliferation rate $h(S)$:
\begin{equation} \label{eq:defhS}
	h(S) = \alpha (S - S_c)\,,
\end{equation}
where $\alpha$ and $S_c$ are positive constants.

\subsection{Evolution of the extracellular chemoattractant}
\label{secPDES}

For the extracellular signal $S(x, t)$ we formulate a PDE that incorporates
diffusion (with diffusion constant $D_S \ge 0$) and signal consumption by
bacteria, the latter with signal dependent rate $k(S):\R^+\to\R^+$. The 
equation
for $S$ therefore takes the form
\begin{equation} \label{eq:Shybrid}
	\frac{\partial S}{\partial t}
	=
	D_S\frac{\partial^2 S}{\partial x^2}
	-
	k(S)\sum_{i=1}^{N_a(t)} \delta(x-x_i(t))\,.
\end{equation}
For the remainder of the paper we employ a linear form for the consumption
function $k(S)$:
\begin{equation} \label{eq:defnk}
	k(S) = \beta S\,,
\end{equation}
where $\beta$ is a positive constant.

\subsection{Illustrative example}
\label{subsec:hybrid:num}

The hybrid model framework presented in Sections \ref{secveljump} and
\ref{secPDES} includes essential features of the more complicated hybrid
chemotaxis models formulated in \cite{Erban:2005:ICB,Xue:2011:RSS}.
In this section we numerically show that these processes can give rise 
to travelling
waves. For the numerical simulation we employ techniques described
in \cite{Franz:2012:HMI}. In particular, for the extracellular signal 
$S(x, t)$, this means
that the simulation is performed on the one-dimensional domain $[0, L]$ 
with initial
condition $S(x, 0) = S_\infty>0$ and zero-flux boundary conditions. We 
consider
$M+1$ regularly spaced grid points $r_j = j\, \Delta x$, $j=0,\ldots, M,$ 
where
$\Delta x = L/M$ and the values of $S(x_i, t)$ are advanced by a small 
time step
$\Delta t$ and a forward Euler update rule:
\begin{equation} \label{eq:SforwardEuler}
\begin{aligned}
S(r_j, t + \Delta t) & = S(r_j, t) + D_S\, \Delta t
\frac{S(r_{j-1}, t) + S(r_{j+1}, t) - 2S(r_j, t)}{\left(\Delta x\right)^2}\\
& - k(S(r_j, t))\,\Delta t\sum_{i=1}^{N_a(t)} K(r_j - x_i(t))\,.
\end{aligned}
\end{equation}
In the above $K: \R \to \R^+$ is the symmetric, normalised and non-negative 
kernel
\begin{equation*}
	K(\xi) = \frac{1}{\sqrt{2\pi \sigma^2}}
	\exp\left[-\frac{\xi^2}{2\sigma^2}\right]\,,
\end{equation*}
where the kernel width $\sigma$ is a positive real number.
Here, $K(r_j - x_i)$ represents the influence a bacterium
at position $x_i$ has on grid point $j$.

The simulation of the individual bacterium is given in the full system
(\ref{eq:hybridmodel:x})--(\ref{eq:hybridmodel:y2}) and complemented by the
birth and death processes described in Section~\ref{secveljump},
where we use the same time step $\Delta t$ as
in \eqref{eq:SforwardEuler}. To calculate the necessary off-grid
values of extracellular signal, we linearly interpolate from
the two nearest grid points. We further simplify the system
(\ref{eq:hybridmodel:x})--(\ref{eq:hybridmodel:y2}) by exploiting the 
separate
time scales for excitation and adaptation (i.e. $t_e \ll t_a$): specifically, 
we assume
the update equation (\ref{eq:hybridmodel:y1}) for $y^{(1)}$ is in a 
quasi-equilibrium,
which is identical to the assumption $t_e = 0$. The value
for $y^{(1)}$ can therefore be calculated by
\begin{equation}
y^{(1)}(t) = S(x(t),t) - y^{(2)}(t)\,.
\label{reducedy}
\end{equation}
Illustrative results are presented in Figure \ref{fig:hybridSim}. For 
this simulation,
$N_a(0)=10^4$ bacteria were initialised at positions $x_i(0)$, randomly 
generated as
the absolute value of a Gaussian random variable with variance much 
smaller than the
domain length $L$. The initial velocity (direction of movement) is 
generated uniformly at
random and initial values of the extracellular signal and internal 
variables are taken as
\begin{eqnarray*}
y^{(1)}_i(0) = S_\infty, \; \; y^{(2)}_i(0) =  0,
\quad && \mbox{for} \; i=1,2,\dots,N_a(0), \\
S(x,0) \equiv S_\infty
\quad && \mbox{for} \; x \in [0,L],
\end{eqnarray*}
where $S_\infty=1$. We simulate the system until 
time $T_{\mathrm{final}} = 100$ and
plot both the distribution of bacteria and concentration of chemoattractant 
$S$ in
Figure \ref{fig:hybridSim}(a). We also estimate the wave speed as a 
function of time in
Figure \ref{fig:hybridSim}(b).
\begin{figure}[t]
\centerline{\raise 4.3cm
\hbox{(a)}
\hskip -4mm
\includegraphics[height=4.2cm]{./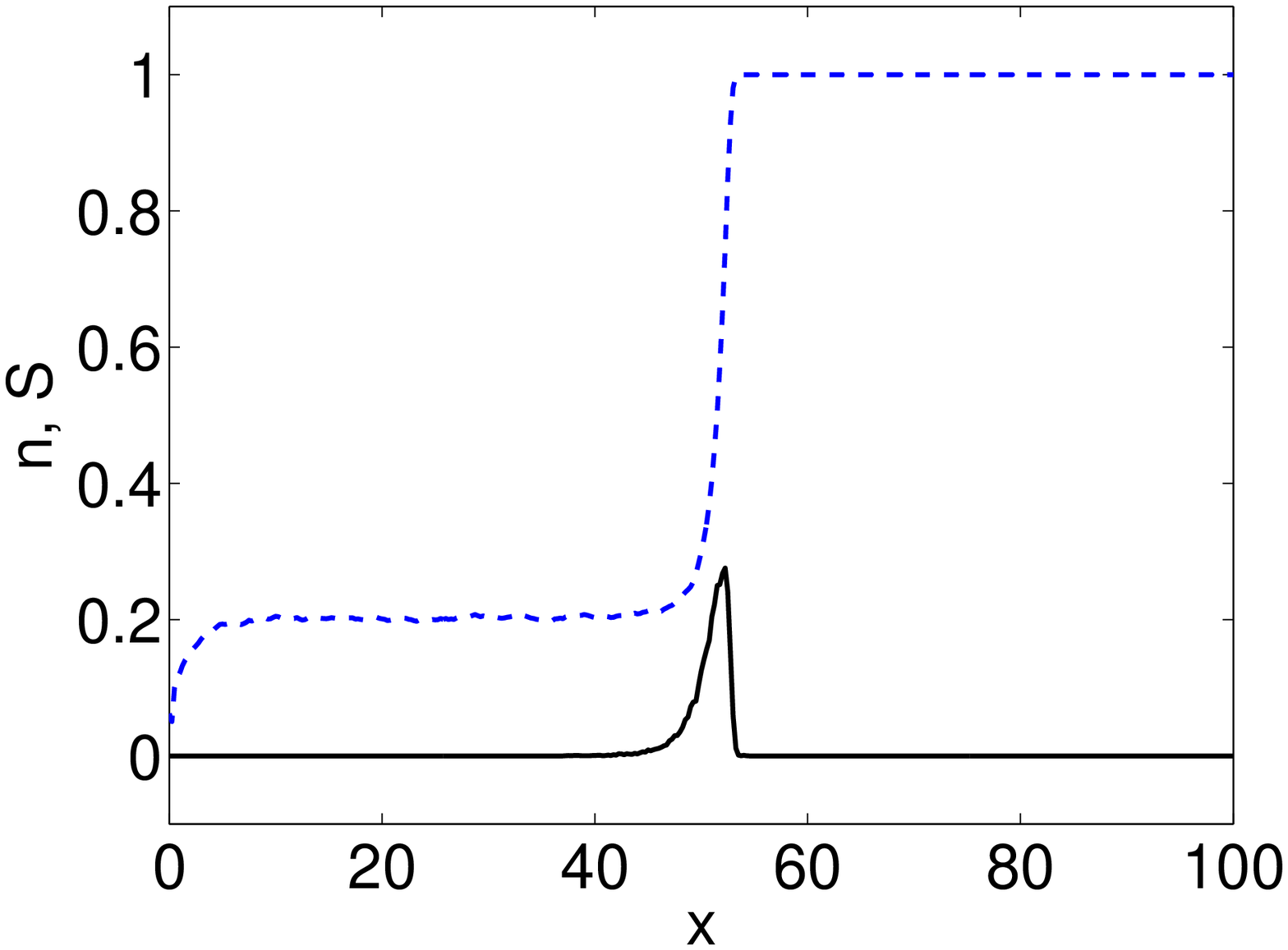}
\hskip 4mm
\raise 4.3cm \hbox{(b)}
\hskip -4mm
\includegraphics[height=4.2cm]{./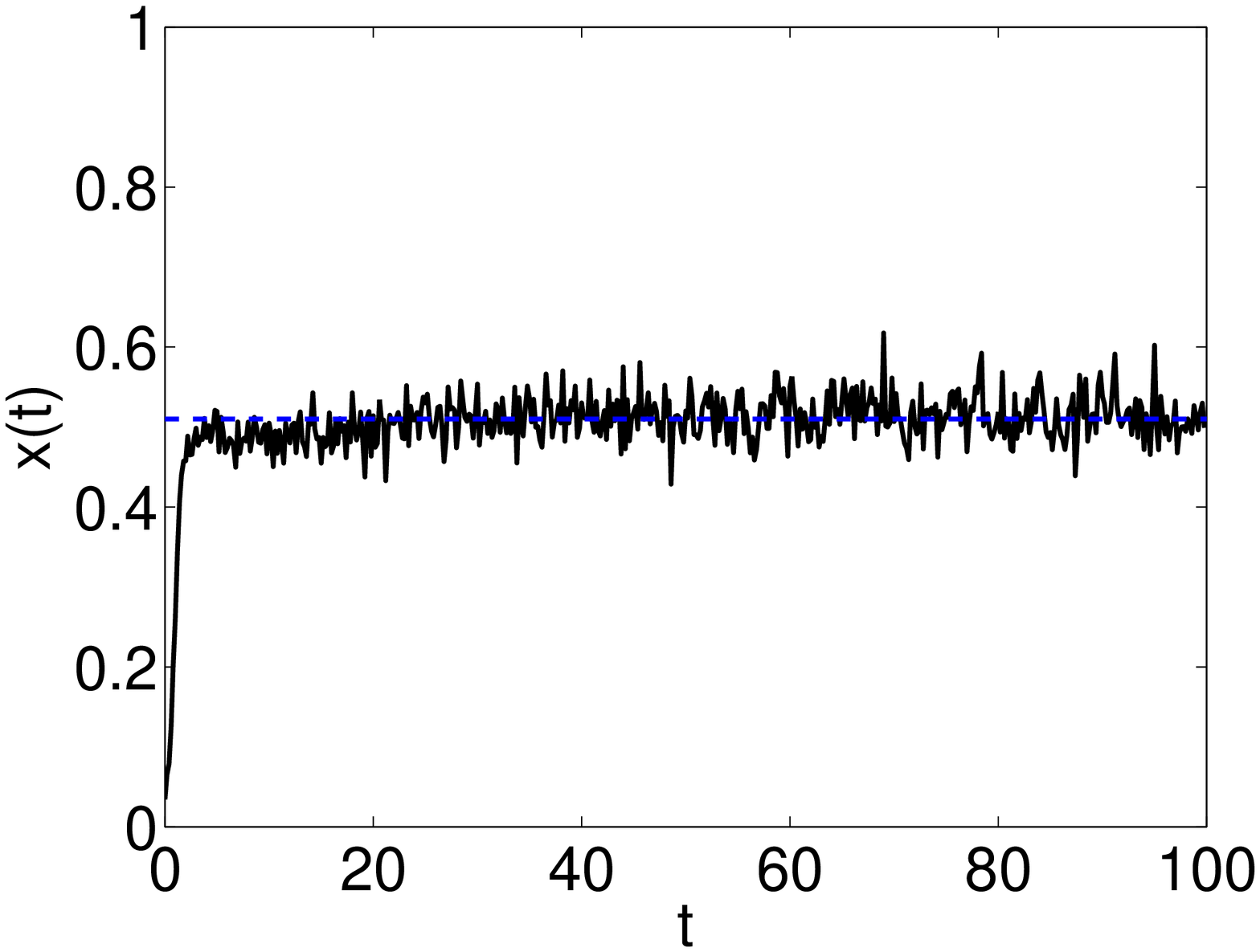}
}
\centerline{\raise 4.3cm
\hbox{(c)}
\hskip -4mm
\includegraphics[height=4.2cm]{./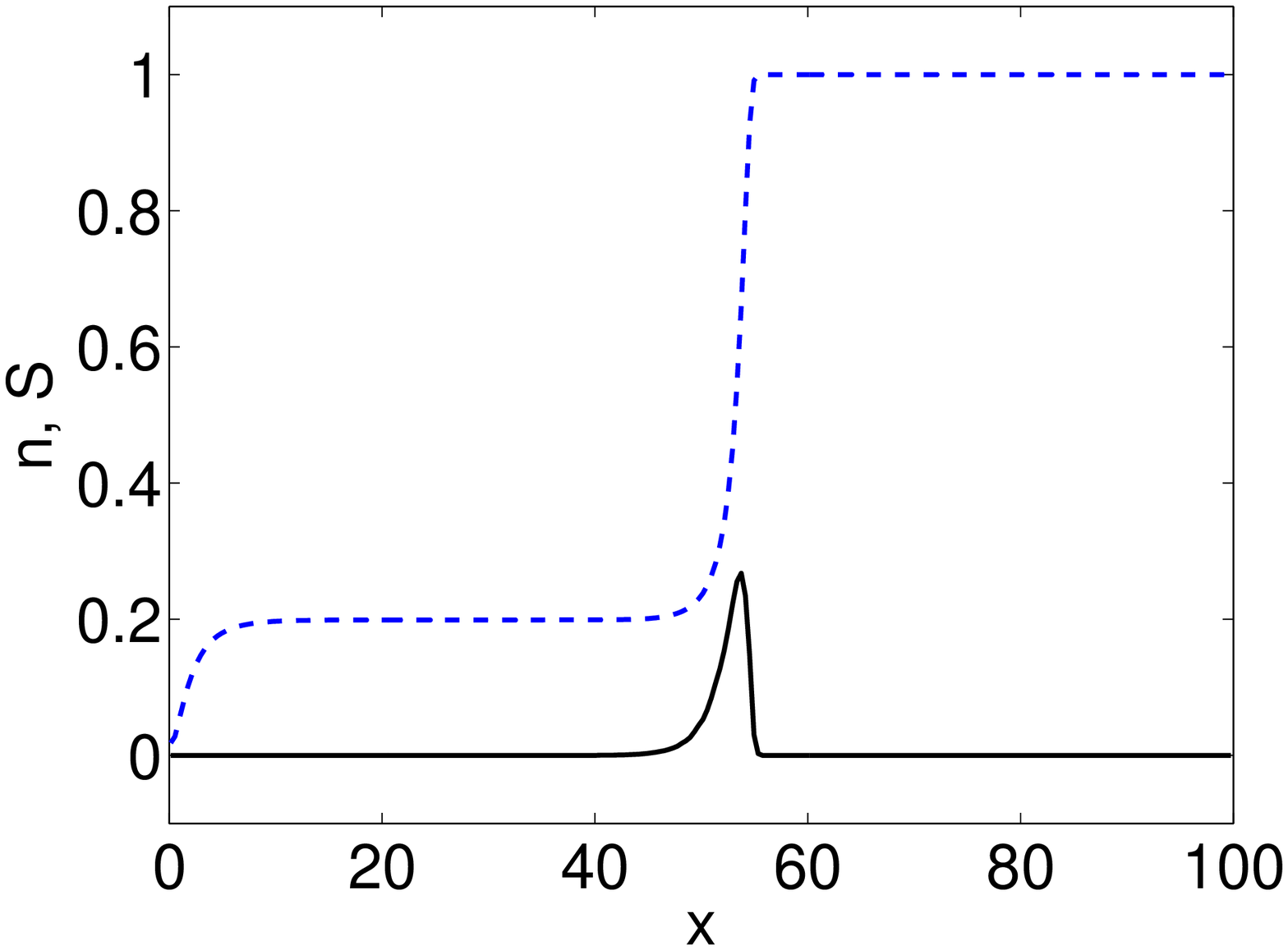}
\hskip 4mm
\raise 4.3cm \hbox{(d)}
\hskip -4mm
\includegraphics[height=4.2cm]{./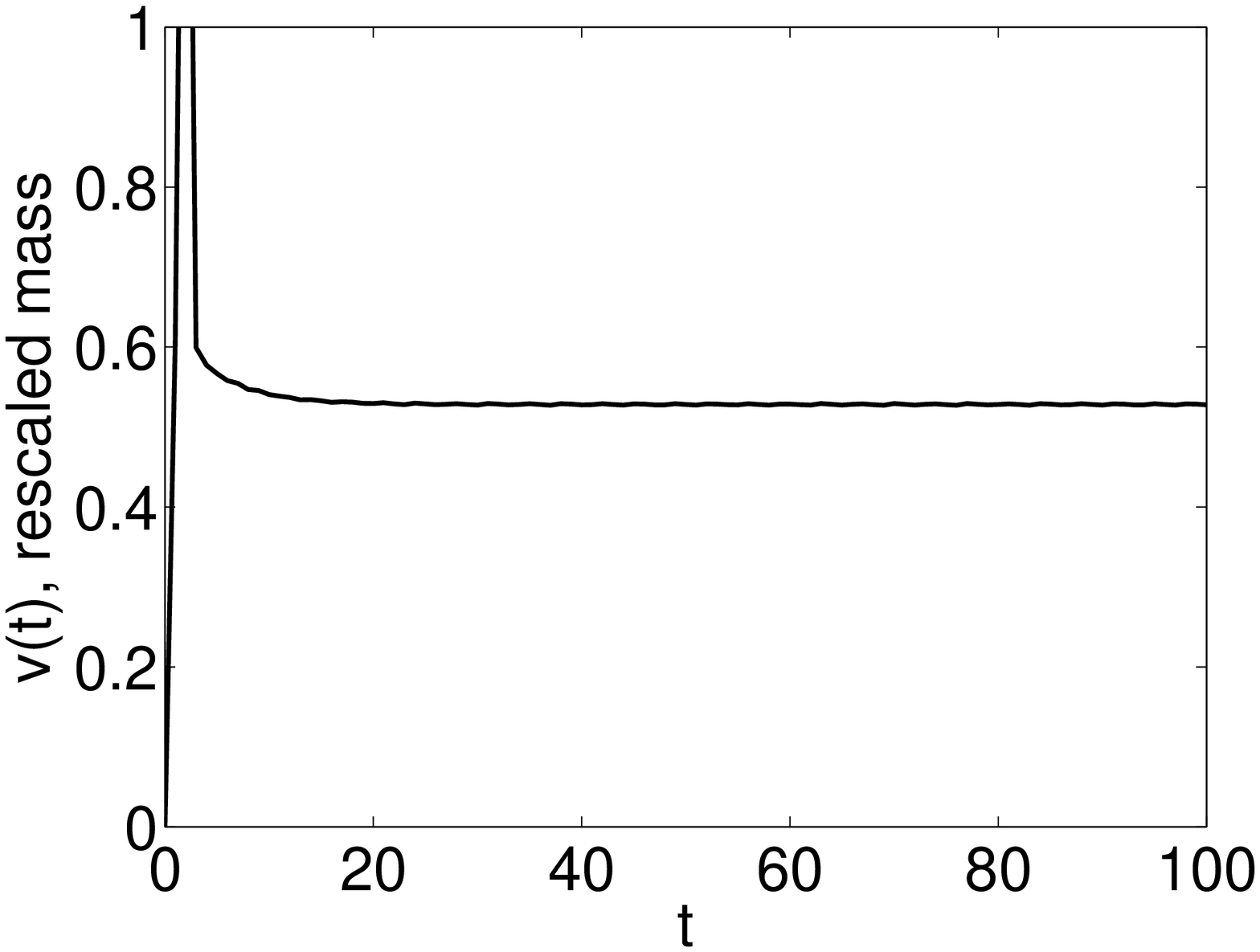}
}
\caption{{\it Numerical solutions of the hybrid chemotaxis model
$(\ref{eq:hybridmodel:x})$--$(\ref{eq:hybridmodel:y2})$
and $(\ref{eq:SforwardEuler})$ and PDE System A  
$(\ref{eq:mesoP})$--$(\ref{eq:mesoS})$.} \hfill\break
(a) {\it Wave form for the hybrid model after time $t = 100$.
Solid line: estimated density of bacteria, dashed line:
extracellular chemical signal $S$.} \hfill\break
(b) {\it Measured speed of travelling wave (solid line).
Dashed line denotes the average speed.}\hfill\break
(c) {\it Wave form for PDE system A after time $t = 100$.
Solid line: estimated density of bacteria, dashed line:
extracellular chemical signal $S$.} \hfill\break
(d) {\it Measured speed of travelling wave (solid line) for PDE System A.
Note that the spike near $t=0$ is a product of the wave speed calculation 
method.
\hfill\break The dimensionless
parameters are: $\alpha = \beta = s = 1$, $S_c = 0.5$, $S_\infty = 1$,
$\Delta t = 10^{-3}$, $\Delta x = 0.25$, $L = 100$, $\lambda_0 = 10$,
$\kappa = 0.01$, $D_S = 0$, $t_a = 0.1$, $\sigma = 0.5$.}
}
\label{fig:hybridSim}
\end{figure}%
We clearly see formation of a travelling band of bacteria, moving 
rightwards with
average speed $v=0.51$ (plotted as the dashed line in Figure 
\ref{fig:hybridSim}(b)).

\subsubsection*{Influence of the growth term}
\label{subsec:comp:nogrowth}
To investigate the influence of the growth term on the existence of
travelling waves, we simulate the full hybrid
model (\ref{eq:hybridmodel:x})--(\ref{eq:hybridmodel:y2}) and 
(\ref{eq:SforwardEuler})
including ($\alpha = 1$) and excluding ($\alpha = 0$) growth and 
death processes.
We use identical parameters to those described above and present 
the results in
Figure~\ref{fig:nogrowth}. In Figure~\ref{fig:nogrowth}(a) the 
position of the wave front
(defined as the right-most position for which $S(x) < 0.9$) is 
compared. The full hybrid
system (dashed line) generates a straight line, indicating a wave 
moving with constant
speed. While the system excluding growth and death (solid line) 
moves with a similar initial speed, speed
is gradually lost over time: the shape of $n(x, t)$ at different 
times for this case is shown in Figure~\ref{fig:nogrowth}(b). We 
clearly see that no true travelling wave forms, with many agents
being left far behind the wave front, leading to its slowing down. 
Thus, we can interpret growth
and death terms in terms of a stabilising role on the wave profile: 
although not all agents can keep
up with the wave, new agents are constantly created at the front and 
the agents that drop out
eventually die, resulting in a travelling band of agents.

\begin{figure}[t]
\centerline{\raise 4.3cm
\hbox{(a)}
\hskip -4mm
\includegraphics[height=4.2cm]{./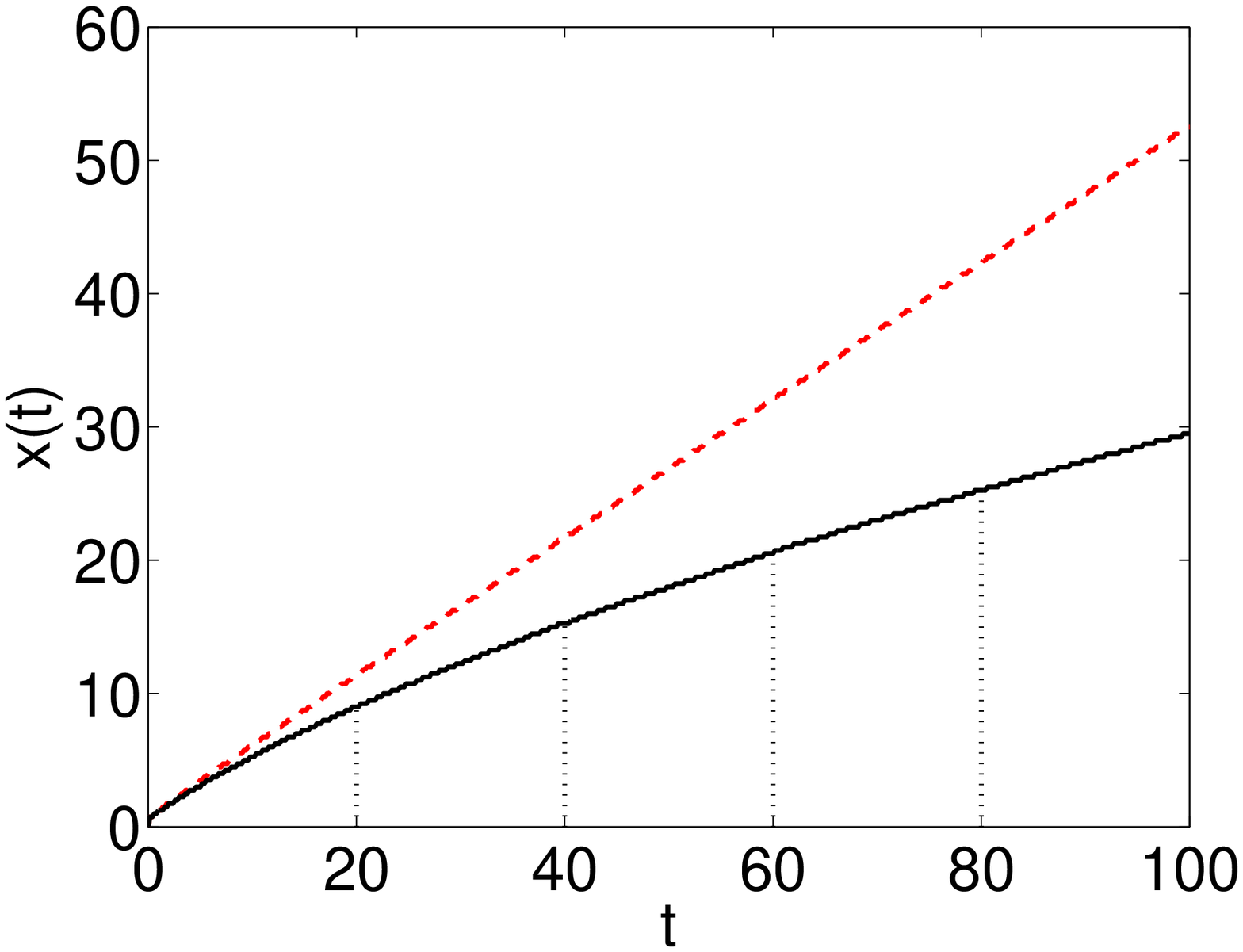}
\hskip 4mm
\raise 4.3cm \hbox{(b)}
\hskip -4mm
\begin{overpic}[height=4.2cm]{./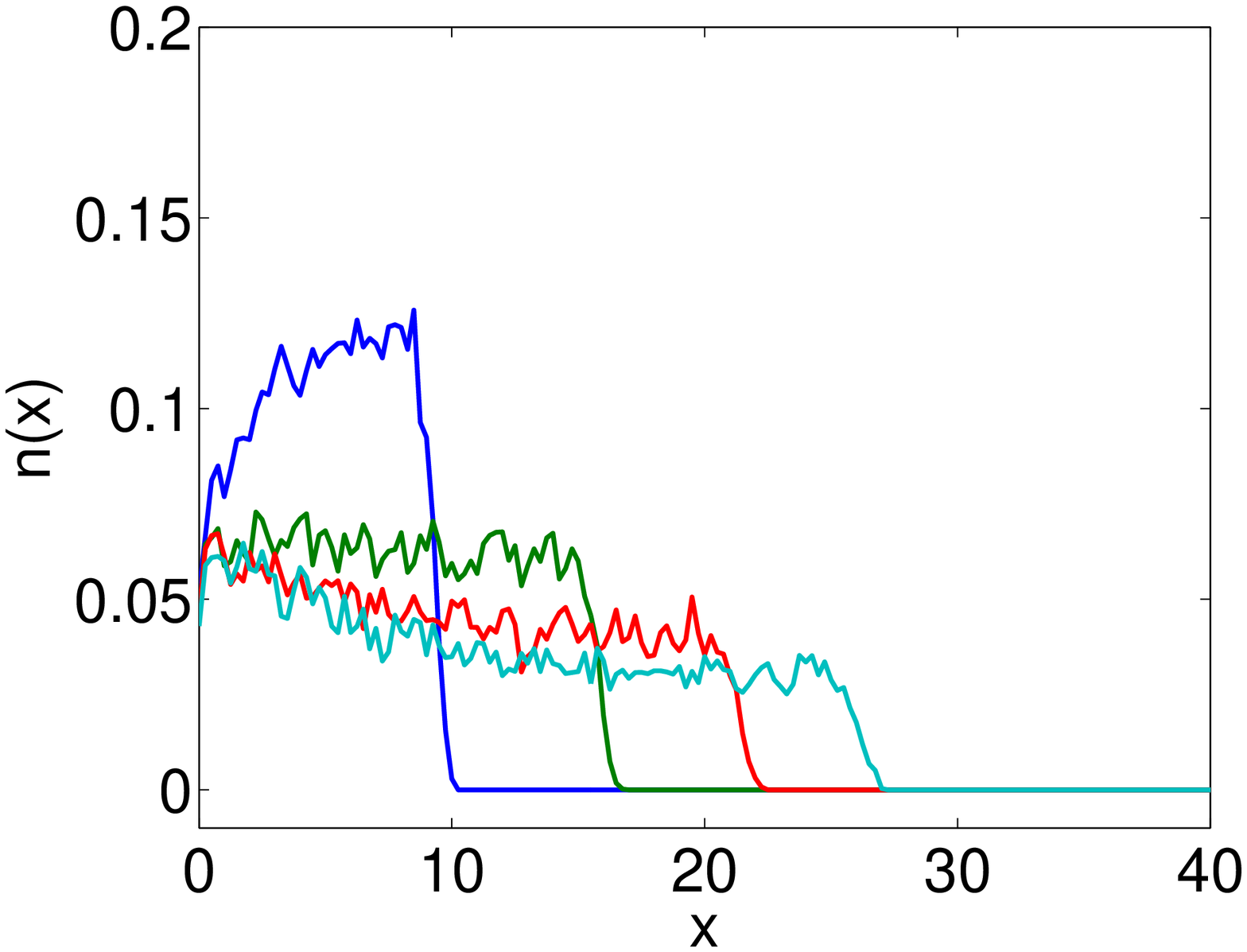}
	\put(34,51){\line(1,1){10}}
	\put(46,61){$t_1 = 20$}
	\put(46,33){\line(1,2){6}}
	\put(53,45){$t_2 = 40$}
	\put(56,29){\line(1,1){8}}
	\put(65,36){$t_3 = 60$}
	\put(65,24){\line(2,1){8}}
	\put(74,27){$t_4 = 80$}	
\end{overpic}
}
\caption{{\it Numerical solutions of the hybrid chemotaxis model
$(\ref{eq:hybridmodel:x})$--$(\ref{eq:hybridmodel:y2})$
and $(\ref{eq:SforwardEuler})$ without growth and death terms.} \hfill\break
(a) {\it Comparison of position of wave front over time.
Solid line: without growth/death ($\alpha = 0$),
dashed line: with growth/death ($\alpha = 1$).}\hfill\break
(b) {\it Wave form at different times during simulation with $\alpha = 0$.
From left to right: $t = 20, 40, 60, 80$.}\hfill\break
{\it Remaining parameters as in Figure~\ref{fig:hybridSim}.}}
\label{fig:nogrowth}
\end{figure}

\section{From hybrid models to macroscopic PDEs}
\label{sec:continuum}
In this section we derive macroscopic PDEs for the spatio-temporal
density of bacteria $n(x, t)$ at given position $x\in\R$
and time $t\geq 0$. An implicit assumption of the derivation is
spatial independence of bacteria, which allows formulation of a 
continuous mesoscopic
system. We then use results from \cite{Erban:2004:ICB} to obtain 
the macroscopic
equations. To illustrate the successive formulation of models we construct
two systems of PDEs -- denoted System~\textbf{(A)}
and System~\textbf{(B)} -- to be referred to in the remainder of the paper.

\subsection{System~\textbf{(A)}}

We define the mesoscopic densities $p^{\pm}(x, y^{(2)}, t)$ for
left and right-moving bacteria, depending on their position
$x\in\R$, their internal variable $y^{(2)}\in\R$ and
$t \geq 0$. If the signal profile $S \equiv S(x,t)$ was uninfluenced 
by bacteria,
densities $p^{\pm}$ would satisfy the following system of hyperbolic 
PDEs:
\begin{equation}
\begin{aligned}
&	\frac{\partial \pps}{\partial t} + s\frac{\partial \pps}{\partial x}
	+ \frac{\partial}{\partial y^{(2)}} \left(\frac{S(x,t)-y^{(2)}}{t_a}\pps
	\right)& = & -\lambda \pps + \lambda\pms + h(S(x,t))\pps\,,\\
&	\frac{\partial \pms}{\partial t} - s\frac{\partial \pms}{\partial x}
	+ \frac{\partial}{\partial y^{(2)}} \left(\frac{S(x,t)-y^{(2)}}{t_a}\pms
	\right)& = & {\color{white}+}\lambda \pps - \lambda\pms + h(S(x,t))\pms\,,
\end{aligned}
\label{eq:mesoP}
\end{equation}
where $\lambda$ is defined in \eqref{eq:lamhybrid} which,
under (\ref{reducedy}), can be simplified to
\begin{equation}
\lambda = \lambda_0
\left(1 - \frac{S(x)-y^{(2)}}{\kappa + |S(x)-y^{(2)}|}\right)\,.
\label{eq:lamhybridred}
\end{equation}
The signal dynamics is described by \eqref{eq:Shybrid} which can be 
rewritten in terms of
$p^{\pm}$ as \begin{equation}
\frac{\partial S}{\partial t} =
D_S \frac{\partial^2 S}{\partial x^2}
- k(S) \int_{\R} (\pps + \pms) \mathrm{d} y^{(2)}\,.
\label{eq:mesoS}
\end{equation}
We denote the system of equations \eqref{eq:mesoP}--\eqref{eq:mesoS}
as System~\textbf{(A)}.

The system (\ref{eq:mesoP}) (for the one-particle distribution)
can be derived by integrating the probability distribution 
function $p(x_1,v_1,{\mathbf y}_1;x_2,v_2,{\mathbf y}_2;\dots \,|\, S(x,t))$ 
for the many particle system, utilizing the fact that the movement 
of individuals are biased by the signal function $S(x,t)$, but 
independent to each other. However, for the hybrid chemotaxis 
models described in Sections~\ref{secveljump} and \ref{secPDES}, 
individual bacteria interact via the extracellular signal $S$
which complicates the derivation of (\ref{eq:mesoP}).
In \cite{Erban:2012:ICB}, a kinetic description has been derived for a model 
of interacting locusts, using a modified version of the BBGKY hierarchy 
from the classical
kinetic theory of gases \cite{Cercignani:1994:MTD}. The system we consider
 here is much more
complicated to analyse than the locust model studied in \cite{Erban:2012:ICB},
due to the variable number of bacteria and internal variables.
Thus the kinetic description (\ref{eq:mesoP}) can only 
be considered as an approximation to the one particle distributions 
of the interacting system. 

The capacity of the above
mesoscopic system to generate travelling bands analogous to those
observed in the hybrid model is illustrated in Figure \ref{fig:hybridSim}(c)-(d).
For details of the numerical method employed for this and other simulations of
the continuous model, we refer to \cite{Xue:2011:TWH}.
The qualitatively and quantitatively close correspondence in solutions under
equivalent parameters and initial conditions  corroborates the
use of the above approximation.

\subsection{System~\textbf{(B)}}
We consider a macroscopic model in this section. Define the macroscopic densities
\begin{equation}
\label{defppm}
p^{\pm}(x, t) = \int_{\R} p^{\pm}(x, y^{(2)}, t) \mathrm{d}y^{(2)},
\end{equation}
and let them satisfy the following system
\begin{equation}
\begin{aligned}
& \frac{\partial \pps}{\partial t} + s \frac{\partial \pps}{\partial x}
=
- \lambda^{+}\left(\gradS \right) \pps + \lambda^{-}\left(\gradS\right)
\pms + h(S) \pps\, ,
\\
& \frac{\partial \pms}{\partial t} - s \frac{\partial \pms}{\partial x}  =
{\color{white}+} \lambda^{+}
\left(\gradS \right) \pps - \lambda^{-}\left(\gradS \right) \pms +
h(S) \pms\,,
\end{aligned}\label{eq:macp}
\end{equation}
where
the turning rates  $\lambda^{\pm}$ are given by
\begin{equation} \label{eq:lamcont}
\lambda^{\pm}
=
\lambda_0
\left(1 \mp \chi \gradS \right)  \quad \mbox{with} \quad
\chi = \frac{s t_a}{\kappa \lambda_0 (1+2\lambda_0 t_a)}\,.
\end{equation}
Using (\ref{defppm}), equation (\ref{eq:mesoS}) can be written
as
\begin{equation}
\label{Seqbef}
\frac{\partial S}{\partial t} =
D_S \frac{\partial^2 S}{\partial x^2} - k(S) (\pps + \pms) \,.
\end{equation}
We will denote \eqref{eq:macp} and \eqref{Seqbef} along with the definition
of $\lambda^\pm$ in \eqref{eq:lamcont} as System~\textbf{(B)}.
According to the analysis in \cite{Erban:2004:ICB,Xue:2009:MMT},
System~\textbf{(B)} is quantitatively consistent with System~\textbf{(A)} when the external
signal $S(x)$ changes slow enough such that cells are close to their fully adapted state, in
which case cell movement is only moderately modified by the signal.

\bigskip
In the rest of the paper, we assume diffusion of extracellular
signal to occur on a much slower time scale than the
active motion of the bacteria, hence $D_S = 0$. The number of parameters of the
above models can be reduced by setting $s, S_{\infty}, \alpha, \beta$ to 
one through
rescaling. We show this in detail for System~\textbf{(B)} as follows.
Rescaling the variables
$S =  \hat{S} S_\infty$, $p^{\pm} = \hat{p}^\pm \alpha S_\infty/\beta$,
$t = \hat{t}/(\alpha S_\infty)$, $x = \hat{x}s/(\alpha S_\infty)$
and the parameters $S_c = \hat{S}_c S_\infty$,
$\lambda_0 = \hat{\lambda}_0 \alpha S_\infty$, taking (\ref{eq:defhS}) and
substituting into System~\textbf{(B)} we obtain, after dropping 
hats for notational simplicity,
\begin{equation}
\begin{aligned}
\frac{\partial \pps}{\partial t} +   \frac{\partial \pps}{\partial x}
& =   - \lambda^{+}\left(\gradS \right) \pps + \lambda^{-}\left(\gradS\right)
\pms + (S-S_c) \pps \,, \\
\frac{\partial \pms}{\partial t} -   \frac{\partial \pms}{\partial x}
& =   {\color{white}+} \lambda^{+}\left(\gradS \right) \pps -
\lambda^{-}\left(\gradS \right) \pms +  (S-S_c) \pms \,,  \\
\frac{\partial S}{\partial t} & =    - S (\pps + \pms) \,.
\end{aligned} \label{sys_pS2}
\end{equation}
We are interested in travelling wave solutions that develop from a pointwise inoculation of cells
into a domain containing uniformly distributed nutrient $S$. In this
scenario, $p^{\pm}$ (defined as in each system) should form travelling pulses while $S$ forms a
travelling front and relevant boundary conditions will be
\begin{equation}
\begin{aligned}
& p^{\pm}, \pd{p^{\pm}}{x}, \pd{S}{x}\to 0  && \mbox{as}\quad
x \to \pm \infty \,,\\
& S \to 1 && \mbox{as}\quad  x \to  + \infty\,, \\
& S \to S_- && \mbox{as} \quad x \to  - \infty\,.
\end{aligned}\label{bc_pS2}
\end{equation}
Note that $S_-$ is currently unknown; we determine its value in
the travelling wave analysis of Section \ref{sec:travelling}.
Since $p^{\pm}$ and $S$ are physical quantities, we search
for nonnegative travelling wave solutions, i.e.
$$
p^{\pm}\geq 0, \quad S \geq 0.
$$
It is clear that a travelling wave of this form cannot exist
for $S_c \geq 1$ (extinction of bacteria) or for $S_c \leq 0$
(infinite growth) and we will therefore only consider systems
that satisfy $S_c \in (0, 1)$. In the next section we analyse
System~\textbf{(B)} with respect to travelling wave solutions
in order to obtain further insight. To do that, we use the rescaled
system (\ref{sys_pS2}).

\section{Travelling wave analysis}
\label{sec:travelling}

In this section we first apply the standard travelling
wave ansatz to system \re{sys_pS2} and derive a necessary condition
for the existence of non-negative travelling wave solutions. We then
reduce the resulting ODE system to two components through a change of 
variables
and utilizing an invariant manifold identified for the problem. Finally
we use phase plane methods to analyse the existence and properties of
travelling wave solutions.

\subsection{A necessary condition for the existence of travelling
wave solutions}
Let us apply the travelling wave ansatz
$p^{\pm}(x, t) = p^{\pm}(\xi) = p^{\pm}(x-ct)$
and $S(x, t) = S(\xi) = S(x-ct)$,
where $c$ is the unknown wave speed \cite{Murray:2002:MB}.
System \re{sys_pS2} becomes
\begin{equation}
\begin{aligned}
 (1-c)(p^+)' & =
 - \lambda_0 \left( 1-\chi \,S'\right) \pps
 +  \lambda_0 \left( 1+\chi\,S'\right) \pms
 + (S-S_c) \pps \,, \\
-(1+c)(p^-)' & =
{\color{white}+}  \lambda_0 \left( 1-\chi\,S'\right) \pps -
\lambda_0 \left( 1+\chi\,S'\right) \pms + (S-S_c) \pms \,,  \\
-c S' & =    - S (\pps + \pms) \,,
\end{aligned}  \label{twsys_pS}
\end{equation}
where the primes denote derivatives with respect to the travelling
wave variable $\xi$. Note that any point on the $S$-axis
is a steady state of the system \re{twsys_pS} and that linear
stability of such a steady state, $(\pps,\pms,S) = (0, 0, S_*)$, is
governed by the eigenvalues of the matrix $A^{-1}B$, where
$$
A =
\begin{pmatrix}
1-c & 0 & 0 \\ 0 & -1-c & 0 \\ 0 & 0 & -c
\end{pmatrix}\, , \qquad
B =
\begin{pmatrix}
\ds-\lambda_0 + S_* - S_c & \lambda_0 & 0 \\
\ds \lambda_0 & -\lambda_0 + S_*  - S_c& 0 \\
- S_* & - S_* & 0
\end{pmatrix}\,.
$$
The eigenvalues of $A^{-1}B$ are
\begin{equation}
\mu_1 = 0, \qquad \mu_{2,3} = \frac{ c(-\lambda_0 + S_* - S_c)
\pm \sqrt{\Delta_1(S_*) }}{1-c^2} \,,
\label{fullsys_eigenvalues}
\end{equation}
where
\begin{equation}
\Delta_1 (S_*)= c^2 \lambda_0^2  +  (S_* - S_c - 2\lambda_0)(S_*- S_c).
\label{simdel}
\end{equation}
Under the boundary conditions \re{bc_pS2} we
look for nonnegative solutions to \re{twsys_pS} connecting steady
states $(\pps, \pms, S) = (0, 0, S_{-})$ and $(\pps,\pms,S) =(0, 0, 1)$. 
To admit such a solution the latter
must be a stable node, since a stable spiral would imply negative values
for $p^{\pm}$. Hence, a necessary condition is $\Delta_1(1) \geq 0$, which
is equivalent to
\begin{equation}
\label{eq:cstar}
c\geq c^* = \frac{1}{\lambda_0} \sqrt{(2\lambda_0 - 1 + S_c) (1-S_c)}\,.
\end{equation}
Given $2\lambda_0 > (1-S_c)$ it is easy to show that
$c^*\in [0,1]$.

\begin{thm}
A necessary condition for the existence of nonnegative travelling 
wave solutions of the system $(\ref{sys_pS2})$ is
\begin{equation}
2\lambda_0 > (1-S_c). \label{prop1}
\end{equation}
\end{thm}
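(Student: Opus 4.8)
The plan is to argue by contradiction: suppose a nonnegative travelling wave solution of \re{sys_pS2} exists but that $2\lambda_0 \leq 1 - S_c$, and derive an inconsistency with the boundary conditions \re{bc_pS2}. The key observation, already assembled in the discussion preceding the statement, is that the travelling wave must connect the steady state $(p^+,p^-,S) = (0,0,1)$ at $\xi \to +\infty$ to $(0,0,S_-)$ at $\xi \to -\infty$, and that approach to the state at $+\infty$ along nonnegative trajectories forces that state to be a stable node rather than a stable spiral — otherwise $p^\pm$ would oscillate through negative values near $\xi = +\infty$. So the real content is: \emph{the linearisation at $S_* = 1$ has no genuinely complex eigenvalues with nonzero imaginary part}, i.e. the discriminant quantity $\Delta_1(1)$ must be nonnegative.

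First I would record that the nonzero eigenvalues $\mu_{2,3}$ at a generic steady state $(0,0,S_*)$ are given by \re{fullsys_eigenvalues}, with the radicand $\Delta_1(S_*)$ in \re{simdel}; this is the linearisation computation $A^{-1}B$ already displayed, so I may quote it. Next I would specialise to $S_* = 1$, giving
\[
\Delta_1(1) = c^2\lambda_0^2 + (1 - S_c - 2\lambda_0)(1 - S_c).
\]
Then I would invoke the node requirement: a nonnegative heteroclinic orbit landing on $(0,0,1)$ cannot spiral, so $\Delta_1(1) \geq 0$ is necessary. Rearranging, $c^2 \lambda_0^2 \geq (2\lambda_0 - 1 + S_c)(1 - S_c)$, and since $c \leq 1$ is forced by the structure of the problem (the wave speed cannot exceed the individual cell speed, which after rescaling is $1$ — this is consistent with, and can also be read off from, the sign structure of the factors $1-c$ and $1+c$ in \re{twsys_pS} needed for the pulse $p^\pm$ to decay on both sides), we get $\lambda_0^2 \geq c^2\lambda_0^2 \geq (2\lambda_0 - 1 + S_c)(1-S_c)$. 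Finally, if $2\lambda_0 \leq 1 - S_c$ then $2\lambda_0 - 1 + S_c \leq 0 < 2\lambda_0$... here I instead argue directly from the equivalent form \re{eq:cstar}: the inequality $\Delta_1(1)\ge 0$ is exactly $c \geq c^*$ with $c^* = \lambda_0^{-1}\sqrt{(2\lambda_0 - 1 + S_c)(1-S_c)}$, and for a real admissible wave speed to exist at all we need $c^*$ to be a well-defined real number in $[0,1]$; the remark following \re{eq:cstar} shows precisely that $c^* \in [0,1]$ holds \emph{given} $2\lambda_0 > 1 - S_c$, and one checks that when $2\lambda_0 \le 1 - S_c$ the bound $c^* \le 1$ fails (since then $(2\lambda_0-1+S_c)(1-S_c) \ge \lambda_0^2$ would be needed to have been avoided — more cleanly, $c^{*2}\lambda_0^2 = (2\lambda_0-1+S_c)(1-S_c)$ and one shows this exceeds $\lambda_0^2$ when $2\lambda_0 \le 1-S_c$, contradicting $c^* \le c \le 1$). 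Hence no admissible $c$ exists, contradicting the assumed existence of the wave.

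The step I expect to be the genuine obstacle — or at least the one needing the most care — is justifying that $c \leq 1$ rigorously, and more fundamentally pinning down \emph{why} the state at $+\infty$ must be approached as a node and not a spiral in a way that is airtight for a nonnegative solution (one must rule out, e.g., a trajectory that enters the positive cone tangentially along the slow/center eigendirection $\mu_1 = 0$ and never feels the spiralling, which requires checking the center-manifold behaviour, not merely the linear eigenvalues). The cleanest route is probably to note that near $+\infty$ the $S$-equation gives $S' = S(p^++p^-)/c > 0$ as long as $p^\pm$ are positive, so $S$ is monotone there and the dynamics transverse to the $S$-axis is governed by a genuinely two-dimensional linear system in $(p^+,p^-)$ whose coefficient matrix is $A'^{-1}B'$ with $A',B'$ the upper-left $2\times 2$ blocks; that $2\times 2$ system has eigenvalues $\mu_{2,3}$, and nonnegativity of $p^\pm$ as $\xi\to+\infty$ forces these to be real, which is $\Delta_1(1)\ge 0$. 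Everything after that is the elementary algebra sketched above.
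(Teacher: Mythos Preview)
Your argument contains a genuine error at the decisive step. You assert that when $2\lambda_0 \le 1 - S_c$ one has $(c^*)^2\lambda_0^2 = (2\lambda_0 - 1 + S_c)(1-S_c) > \lambda_0^2$, forcing $c^* > 1$ and hence leaving no admissible $c \in [c^*,1]$. But the sign goes the other way: under $2\lambda_0 \le 1 - S_c$ the factor $2\lambda_0 - 1 + S_c$ is nonpositive while $1 - S_c > 0$, so the product is $\le 0$. Consequently
\[
\Delta_1(1) \;=\; c^2\lambda_0^2 - (2\lambda_0-1+S_c)(1-S_c) \;\ge\; c^2\lambda_0^2 \;>\; 0
\]
for every $c \ne 0$, the eigenvalues $\mu_{2,3}$ are automatically real, and the node--versus--spiral criterion places no constraint on $c$ whatsoever. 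The formula \re{eq:cstar} for $c^*$ returns an imaginary number (or zero) in this regime, not a number exceeding $1$. The paper's discussion preceding the theorem is equally terse on this point and does not actually supply the missing step either.

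The true obstruction when $2\lambda_0 \le 1 - S_c$ is of a different kind. For $c \in (0,1)$ the product of the two nonzero eigenvalues at $(0,0,1)$ is
\[
\mu_2\mu_3 \;=\; \frac{(1-S_c)(2\lambda_0 - 1 + S_c)}{1-c^2} \;\le\; 0,
\]
so the fixed point is a \emph{saddle} (or degenerate) in the $(p^+,p^-)$ directions, not a stable node. A short eigenvector computation then shows that the stable eigendirection lies outside the nonnegative cone: writing $a = 1 - S_c - \lambda_0 \ge \lambda_0 > 0$, the first row of the linearised system gives $p^-/p^+ = \lambda_0^{-1}\big((1-c)\mu_3 - a\big) < 0$ along the eigenvalue $\mu_3 \le 0$, so no trajectory with $p^\pm \ge 0$ can approach $(0,0,1)$ as $\xi \to +\infty$. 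Together with the restriction $c < 1$ that you rightly flag as needing its own justification, this delivers the theorem. Your instinct to linearise at $+\infty$ is correct, but the relevant diagnostic is the sign of $\mu_2\mu_3$ (saddle versus node), not the sign of the discriminant $\Delta_1(1)$ (node versus spiral).
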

The above condition is reasonable, as we expect the run duration
to occur on a much faster time scale than proliferation processes.

\subsection{Dimension reduction}

Let us now perform a change of variables by introducing the cell
density $n = p^+ + p^-$ and the cell flux
$j = p^+ - p^-$. The travelling wave system \re{twsys_pS} can then
be written as
\begin{eqnarray}
-c n' + j' & = & ( S - S_c )n\,, \label{eqn_nw} \\
-c j' +    n' & = & 2\lambda_0 \chi\, S'n   + (  S   - S_c - 2\lambda_0 ) j\,,
\label{eqn_jw}\\
- c S' & = & - Sn\,, \label{eqn_Sw}
\end{eqnarray}
where the boundary conditions for this system are
$$
\begin{aligned}
& n, j, \pd{n}{x}, \pd{j}{x}, \pd{S}{x}\to 0  && \mbox{as}\quad
\xi \to \pm \infty \,,\\
& S \to 1 && \mbox{as}\quad  \xi \to  + \infty\,, \\
& S \to S_- && \mbox{as} \quad \xi \to  - \infty\,.
\end{aligned}\label{bc_nj}
$$
From   \re{eqn_Sw}, we have $Sn = c S'$ and, hence, $n =  c(\ln S)'$.
Substituting into \re{eqn_nw} we obtain
$$
- c n' + j' =  c S' -   c S_c (\ln S)'\,.
$$
Integrating and applying the boundary conditions at
$\xi \to +\infty$, an invariant manifold of the problem is given
by
$$
- c n + j = c  (S-1) - c S_c  \ln S \,.
$$
With the definition $f(S)\equiv S-1 - S_c  \ln S$, we obtain
$j = cn + c f(S)$, which can be used to eliminate $j$ from the
system \re{eqn_nw}--\re{eqn_Sw}. For $c\neq 1$ we can solve for
$n'$ and obtain the reduced system
\begin{eqnarray}
n' &=&  \frac{c}{1-c^2} \left[ \frac{ 2\lambda_0 \chi\, S n^2}{c^2}
+  2\,n (S - S_c - \lambda_0) + ( S -S_c -2\lambda_0) f(S) \right],
\label{ode1} \qquad \\
S' &=& \frac{1}{c}Sn\,. \label{ode2}
\end{eqnarray}
For $c=1$, we obtain
\begin{eqnarray}
n &=& \frac{\lambda_0 - S + S_c - \sqrt{(\lambda_0 - S + S_c)^2
- 2\lambda_0\chi\,
S (S-S_c - 2\lambda_0)f(S)}}{2\lambda_0\chi\, S} \,, \qquad
\label{ode1_c1} \\
S' &=& Sn\,, \label{ode2_c1}
\end{eqnarray}
where we chose the solution to the quadratic equation for $n$ that
satisfies the boundary conditions $n\to 0$ as $\xi\to\pm\infty$.

It can be easily shown that $f(S) = 0$ has two solutions in the
region $(0, 1]$ for all $S_c \in (0, 1)$ as follows. Since $f'(S) = 1 - S_c/S$,
$f(S)$ is monotonically decreasing for $S\in(0, S_c)$ and monotonically
increasing for $S\in (S_c, 1]$. With $f(1) = 0$, this
implies $f(S_c) < 0$ and, using $f(S)\to \infty$
for $S\to 0$, we obtain the existence and uniqueness of the
second root of $f(S) = 0$: we call it $S_1 \in (0, S_c)$. The
existence of $S_1$ and the negativity of $f(S)$ for $S\in(S_1, 1)$,
together with the condition $2\lambda_0 > 1 - S_c$, implies that $n$
as given in \re{ode1_c1} is positive everywhere, and that the
given solution therefore satisfies the nonnegativity condition.

\subsection{Steady states and their linear stability}

Using the two roots of $f(S) = 0$ and under the condition \eqref{prop1},  
it is clear that there are two
steady states of the system \re{ode1}-\re{ode2}:
$(n,S) = (0, 1)$ and $(n,S) = (0, S_1)$. Linearising the
system  \re{ode1}-\re{ode2} about its steady states
generates a system of the form
$$
\begin{pmatrix}
n\\
S
\end{pmatrix}' =
A
\begin{pmatrix}
n\\
S
\end{pmatrix},
$$
where, for the general steady state $S_* \in \{S_1, 1\}$, we have
$$
A = \begin{pmatrix}
\ds\frac{2c}{1-c^2}\left(S_* - S_c - \lambda_0\right) &
\qquad\ds\frac{c}{1-c^2} (S_* - S_c - 2\lambda_0)\frac{S_* - S_c}{S_*}
\vspace*{0.3cm}\\
\ds\frac{S_*}{c} & 0
\end{pmatrix}
$$
with
$$
\mathop{\mbox{trace}} A = \frac{2c}{1-c^2}\left(S_* - S_c - \lambda_0\right)\,,
\quad
\det A = -\frac{1}{1-c^2} (S_* - S_c - 2\lambda_0)(S_* - S_c)\,.
$$
The eigenvalues of $A$ are identical to $\mu_{2,3}$ as
given in \re{fullsys_eigenvalues}. The steady state
$(0, 1)$ is therefore a stable node for all $c \in (c^*, 1)$ with
 $c^*$ as defined in \re{eq:cstar}. Similarly, it can be seen
that the steady steady $(0, S_1)$ is a saddle point. The eigenvectors
corresponding to the eigenvalues $\mu_{2,3}$ take the form
$$
\ds\bv_{1,2} =
\begin{pmatrix}
\ds \mu_{2,3} \; , & \ds \frac{S_*}{c}
\end{pmatrix}^T \,.
$$
In the $n-S$ plane, the slopes of the eigenvectors are given by
\begin{equation}
k_{1,2} ( c )   =  \frac{\mu_{2,3}c}{S_*}\,.
\end{equation}
For the steady state $(n, S) = (0, 1)$ this slope can be written in the form
\begin{equation}
k_{1,2} ( c ) = \frac{c^2 \lambda_0^2}{1 - S_c - \lambda_0 \mp \sqrt{\Delta}}\,,
\label{eq:k12:alt}
\end{equation}
where we define $\Delta = c^2\lambda_0^2 + (1 - S_c - 2\lambda_0)(1 - S_c)$
similarly to (\ref{simdel}).

\subsection{ Case I: No chemotaxis ($\kappa = \infty$) }
\label{subsec:analysis:nochem}

We first consider the case where the chemotactic sensitivity
$\chi$ (given by (\ref{eq:lamcont})) vanishes, i.e cells do not
respond chemotactically to changes in $S$. Here, travelling waves
are generated solely through proliferation of bacteria at the wave front.
To understand the wave behaviour we perform a phase
plane analysis for the ODE system (\ref{ode1})--(\ref{ode2}).
Using $\kappa=\infty$ (i.e. $\chi=0$), it reduces to
\begin{eqnarray}
\begin{aligned}
 n' &=  \frac{c }{1-c^2} \Big[
 2n(S - S_c - \lambda_0) + ( S -S_c -2\lambda_0)f(S)
 \Big]\,,  \\
 S' &= \frac{1}{c}Sn\,.
\end{aligned}
\label{ode1_S}
\end{eqnarray}
Thus, the slope of a trajectory in the $n-S$ plane can be written as
$$
\dd{n}{S} =  \frac{c^2 }{1-c^2} \frac{2n(S - S_c - \lambda_0) +
( S -S_c -2\lambda_0)f(S) }{Sn}\,.
$$
Additionally, an expression for the $n-$nullcline $\Gamma_n $ is
given by
$$
n = -\frac{S - S_c - 2\lambda_0}{2(S - S_c - \lambda_0)}f(S)\,,
$$
and the $S$-nullcline is simply
$$
n = 0, \quad \mbox{or} \quad S = 0\,.
$$
Let us now show that travelling waves exist for the reduced
system \re{ode1_S}.

\vskip 2mm

\begin{thm}\label{thm2}
For the case $\chi = 0$ (which is equivalent to $\kappa = \infty$),
a unique travelling wave solution for the system
$(\ref{sys_pS2})$ exists for all $c\in(c^*, 1)$.
\end{thm}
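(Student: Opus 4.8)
The plan is to realise a travelling wave of $(\ref{sys_pS2})$ as a heteroclinic orbit of the planar system $(\ref{ode1_S})$ joining the two equilibria found above, and to produce that orbit by a trapping--region argument together with the Poincar\'e--Bendixson theorem. The starting point is that along any admissible orbit ($n\ge 0$, $S\ge 0$) one has $S' = Sn/c \ge 0$, so $S$ is monotone; hence the $\alpha$- and $\omega$-limit sets of such an orbit consist of equilibria, and in the admissible range the only equilibria are the saddle $(0,S_1)$ and the stable node $(0,1)$. Consequently a travelling wave must be an orbit leaving $(0,S_1)$ along its one-dimensional unstable manifold and tending to $(0,1)$; the relevant branch is the one pointing into $\{n>0\}$, along which the tangent direction in the $n$--$S$ plane has positive slope $k_+ = \mu_2 c/S_1$. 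Since $(0,S_1)$ has a one-dimensional unstable manifold, such an orbit, if it exists, is unique up to translation in $\xi$; and since $(0,1)$ is a sink, any orbit that enters a sufficiently small neighbourhood of it converges to it. Thus the entire content of the theorem is \emph{existence}: one must show that this branch of $W^u(0,S_1)$ reaches $(0,1)$ without ever leaving the admissible set $\{n\ge 0,\ 0\le S\le 1\}$.

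For existence I would construct a compact positively invariant region $\mathcal R\subset\{n\ge 0,\ S_1\le S\le 1\}$ whose only equilibria are $(0,S_1)$ and $(0,1)$, and verify that $W^u(0,S_1)$ enters it. Natural pieces of $\partial\mathcal R$ are the segment $\{n=0,\ S_1\le S\le 1\}$, on which $n' = \frac{c}{1-c^2}(S-S_c-2\lambda_0)f(S)\ge 0$ because $2\lambda_0 > 1-S_c$ forces $S-S_c-2\lambda_0 < 0$ while $f(S) < 0$ on $(S_1,1)$, so the flow points into $\{n>0\}$; and an arc of the $n$-nullcline $\Gamma_n$, which runs from $(0,S_1)$ to $(0,1)$, is positive in between, and across which $n'$ changes sign, since for fixed $S$ the map $n\mapsto n'$ is affine and decreasing with its zero precisely on $\Gamma_n$. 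One then checks that $k_+$ is smaller than the slope of $\Gamma_n$ at $S_1$ --- which reduces to the elementary inequality $\Delta_1(S_1) > c^2(\lambda_0 + S_c - S_1)^2$ --- so $W^u$ leaves the saddle below the nullcline, into the region where $n', S' > 0$. Inside $\mathcal R$ there are no periodic orbits, since $S$ strictly increases wherever $n>0$; the Poincar\'e--Bendixson theorem then forces the $\omega$-limit of the orbit to be an equilibrium, and it cannot be $(0,S_1)$ (because $S$ has moved strictly above $S_1$), so it is $(0,1)$. A final nonnegativity check, using the invariant-manifold relation $j = cn + cf(S)$ together with $f\le 0$ on $[S_1,1]$ and the bounds on $n$ coming from $\mathcal R$, shows $p^\pm\ge 0$ along the orbit, giving a genuine nonnegative travelling wave of $(\ref{sys_pS2})$ for every $c\in(c^*,1)$.

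I expect the main obstacle to be closing the trapping region near $S=1$. The region enclosed by $\{n=0\}$ and the whole of $\Gamma_n$ is \emph{not} positively invariant: on the decreasing part of $\Gamma_n$ near $S=1$ the flow crosses it outward, and on $\{S=1,\ n>0\}$ one has $S' = n/c > 0$, so a priori the orbit could escape through the line $S=1$ with $n>0$, after which $n'<0$ would drive $n$ negative and no genuine wave would survive. Ruling this out needs a sharper upper fence: after $W^u$ first meets $\Gamma_n$ --- necessarily at the maximum of $n$, which lies past the peak of $\Gamma_n$ --- one shows it remains above $\Gamma_n$ with $n$ strictly decreasing, and then compares the three relevant slopes at $(0,1)$: the eigendirection slopes $k_{1,2}$ of the node (with $k_2 < k_1 < 0$, both negative because $\Delta < c^2(\lambda_0 - 1 + S_c)^2$) and the nullcline slope $g'(1)$, which turns out to lie strictly between $k_1$ and $0$. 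This should funnel $W^u$ into $(0,1)$ between the nullcline and the fast stable direction of the node, both of which collapse onto $(0,1)$ as $S\to 1$, forcing $n\to 0$. Turning this slope comparison into a region that is genuinely invariant all the way up to $S=1$ is the delicate step; the rest is bookkeeping with the signs already fixed by $2\lambda_0 > 1-S_c$ and $c\in(c^*,1)$.
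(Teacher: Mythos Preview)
Your overall strategy---realise the wave as a heteroclinic connection of \eqref{ode1_S} and trap the unstable manifold of $(0,S_1)$ in a positively invariant region whose only equilibria are $(0,S_1)$ and $(0,1)$---is exactly the paper's, and your observations about the lower boundary $\{n=0,\ S_1\le S\le 1\}$, the monotonicity of $S$, and uniqueness via the one-dimensional unstable manifold are all correct. The difference, and the gap in your argument, lies in the choice of upper fence.

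You try the $n$-nullcline $\Gamma_n$, correctly diagnose that it fails on its descending part near $S=1$, and then propose to patch things with a local slope comparison at the node; but you never turn this into an actual invariant region, and you yourself flag it as the unresolved ``delicate step''. The paper sidesteps the difficulty entirely by taking as upper fence the \emph{global} straight segment
\[
\Gamma_1=\{(n,S):n=k_2(S-1),\ S\in[S_1,1]\},
\]
i.e.\ the chord through $(0,1)$ with slope $k_2$, the eigendirection slope from \eqref{eq:k12:alt}. The whole existence proof then reduces to the single pointwise inequality $\left.dn/dS\right|_{\Gamma_1}\le k_2$, which the paper verifies by a short chain of elementary estimates using $f(S)/(S-1)\le 1-S_c$ on $[S_1,1]$, the definition of $c^*$, and $S\le 1$. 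Once that holds, the triangle bounded by $n=0$, $S=S_1$, and $\Gamma_1$ is positively invariant; the unstable manifold at $(0,S_1)$ has positive slope and so enters it; and strict monotonicity of $S$ for $n>0$ forces the orbit to $(0,1)$, with no Poincar\'e--Bendixson or nullcline patching needed. Your instinct to bring in the node's eigendirection is therefore the right one---the simplification you are missing is that the straight line along it already works as a fence \emph{globally}, not just near $(0,1)$, which removes the overshoot problem rather than trying to control it after the fact.
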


\begin{proof}
For any $c \in (c^*, 1)$ we can define a region $\Omega$ 
(see Figure~\ref{fig:nochem}(a)),
enclosed by the line $n = k_2(S - 1)$ (with $k_2$ defined 
in \eqref{eq:k12:alt}),
the $S$-nullcline $n = 0$ and the line $S = S_1$. We will 
first show that $\Omega$
\begin{figure}[t]
\centerline{\raise 4.3cm
\hbox{(a)}
\hskip -4mm
\begin{overpic}[height=4.2cm]{./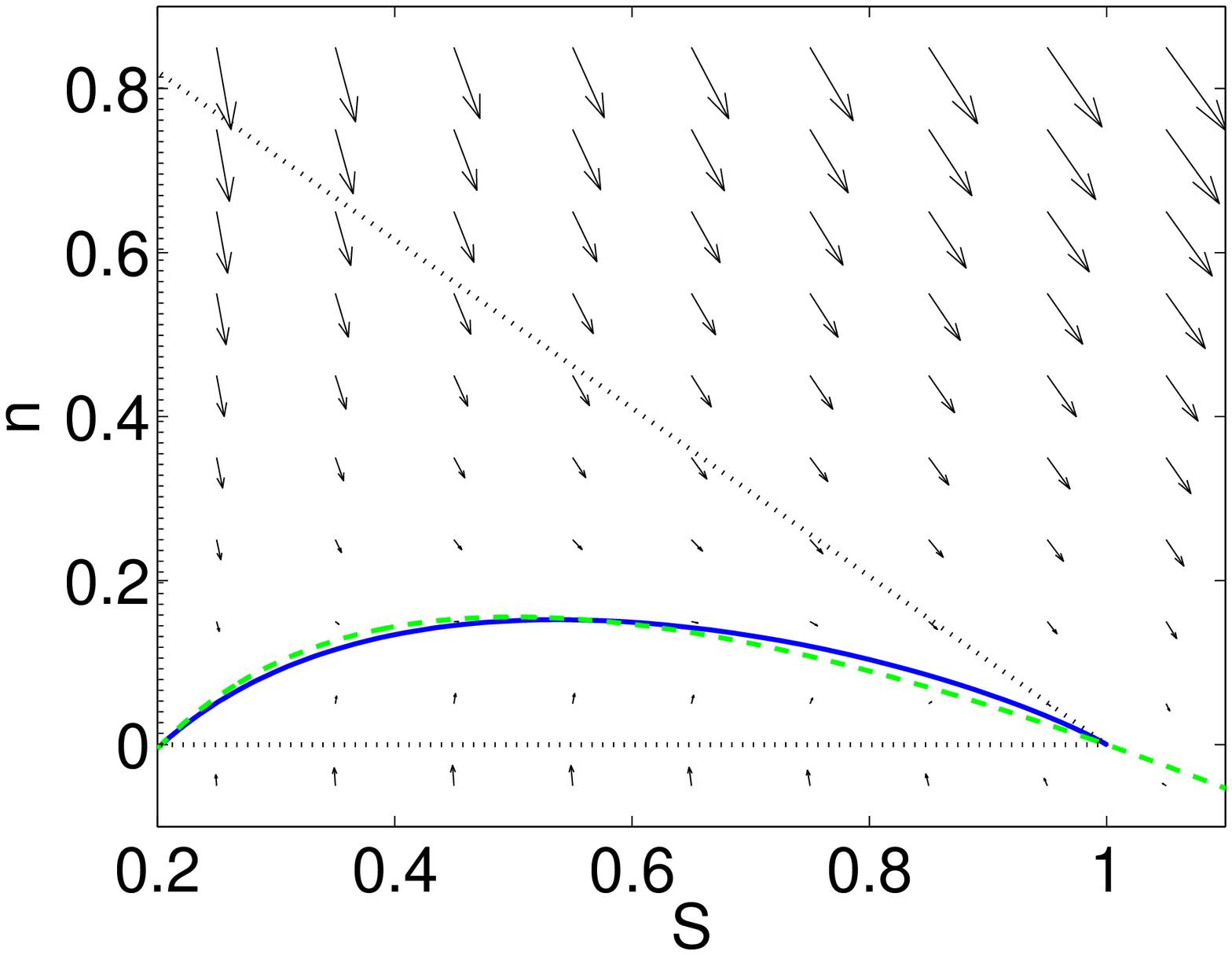}
	\put(30,35){\Large$\Omega$}
\end{overpic}
\hskip 4mm
\raise 4.3cm \hbox{(b)}
\hskip -4mm
\includegraphics[height=4.2cm]{./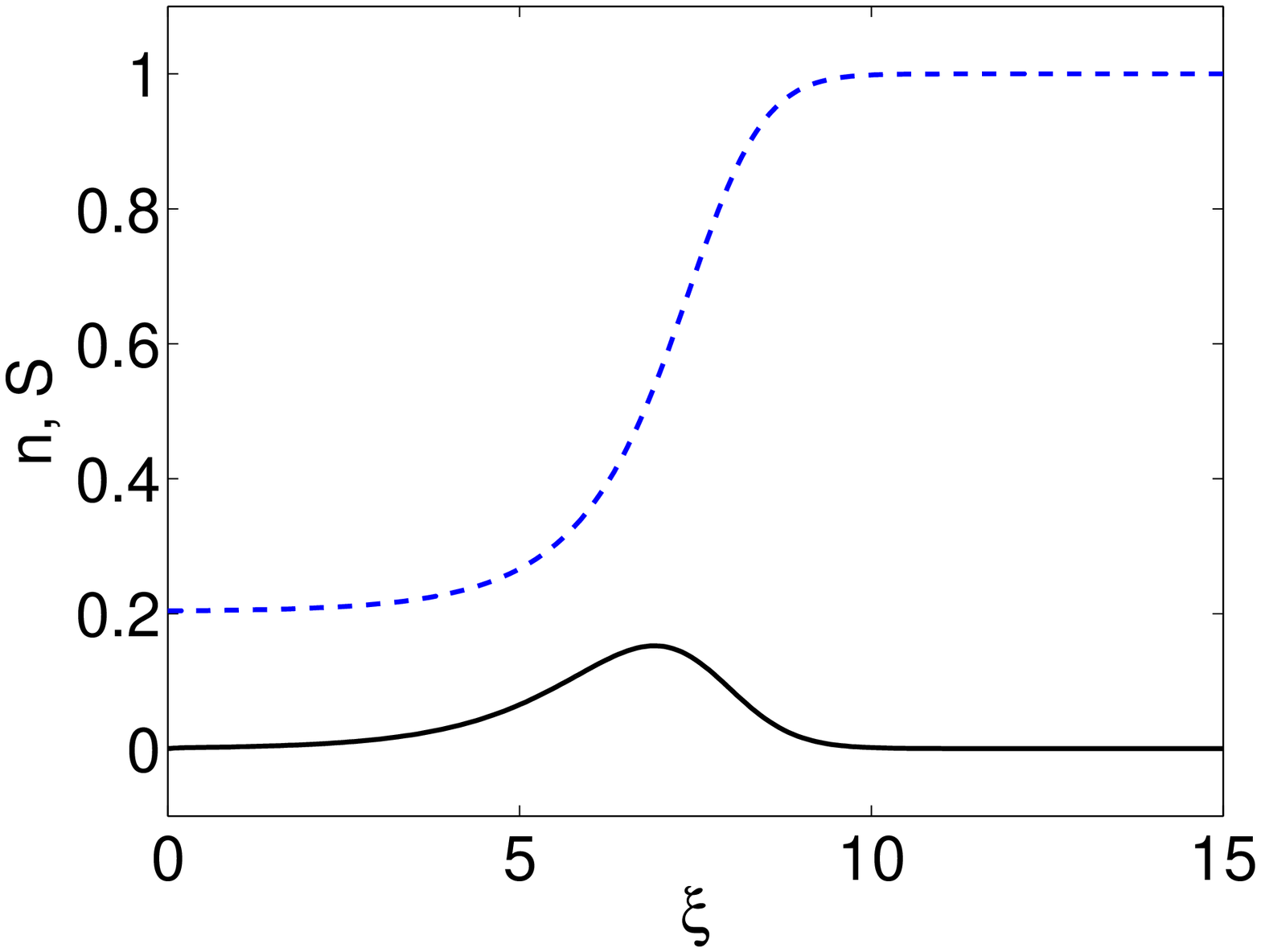}
}
\caption{{\it Illustration of the travelling wave solution
calculated using the ODE system \eqref{ode1}--\eqref{ode2}
for $\chi = 0$,
$\lambda_0 = 10$, $c = c^* = 0.3122$ and $S_c = 0.5$.} \hfill\break
(a) {\it Trajectory of travelling wave solution. Solid line:
trajectory, dashed line: $n$-nullcline, dotted line:
	circumference of invariant region $\Omega$ introduced in the 
	proof of Theorem \ref{thm2}.} \hfill\break
(b) {\it Travelling wave solution in $\xi$.
	Solid line: $n$, dashed line: $S$.\hfill\break
}
}
\label{fig:nochem}
\end{figure}%
is an invariant region of the system \eqref{sys_pS2}.
Since $S$ is non-decreasing everywhere in $\Omega$ and $n'$
is non-negative for $n=0$ and $S\in[S_1, 1]$, we need only to
show that the direction field on the segment
$ \Gamma_1  = \{(n,S): n = k_2(S - 1), S\in [S_1, 1)\}$
points from the top half of the plane above this segment towards
the bottom. Since $S$ is strictly increasing we require
$$
\left.\dd{n}{S}\right|_{\Gamma_1} \leq k_2 \; (\leq 0)\,.
\label{gamma0_direction}
$$
Indeed,
\begin{eqnarray*}
\frac{1-c^2}{c^2} \left.\dd{n}{S}\right|_{\Gamma_1} \!&=&
2 \frac{S-S_c-\lambda_0}{S} + \frac{(S-S_c-2\lambda_0)f(S)}{S(S-1)k_2}\,, \\
& = &  2 \frac{S-S_c-\lambda_0}{S} +
\frac{(S-S_c-2\lambda_0)f(S)}{S(S-1)c^2\lambda_0^2}(1 - S_c - \lambda_0 +
\sqrt{\Delta})\,, \\
&\leq& 2 \frac{S-S_c-\lambda_0}{S} +
\frac{(S-S_c-2\lambda_0)(1-S_c)}{S c^2 \lambda_0^2} (1 - S_c - \lambda_0 +
\sqrt{\Delta})\,, \\
\end{eqnarray*}
where we used \eqref{eq:k12:alt} in the first step and the
relation $f(S)/(S-1) \leq 1 - S_c$ for all $S\in[S_1, 1]$.
Using the fact that $k_2$ and $(S-S_c-2\lambda_0)$ are negative,
we can use the definition of $c^*$ and the fact that
$S\leq 1$ to obtain
\begin{eqnarray*}
\frac{1-c^2}{c^2} \left.\dd{n}{S}\right|_{\Gamma_1}
&\leq& 2 \frac{S-S_c-\lambda_0}{S}
- \frac{2\lambda_0 + S_c - S}{S(2\lambda_0 + S_c - 1)}(1 - S_c - \lambda_0
+ \sqrt{\Delta})
\,,\\
& = & \frac{-2\lambda_0^2 + 3\lambda_0(S-S_c) - (S-S_c)(1-S_c)}{S(2\lambda_0
+ S_c - 1)}
\\ &&- \frac{2\lambda_0 + S_c - S}{S(2\lambda_0 + S_c - 1)}\sqrt{\Delta}\,,\\
& \leq &
\frac{-2\lambda_0^2 + 2\lambda_0(S-S_c) + \lambda_0(1-S_c)- (S-S_c)(1-S_c)}%
{S(2\lambda_0 + S_c - 1)}
\\&& - \frac{2\lambda_0 + S_c - S}{2\lambda_0 + S_c - 1}\sqrt{\Delta}\,,\\
& = & -\frac{(2\lambda_0 + S_c - 1)(\lambda_0 + S_c - S)}{S(2\lambda_0 + S_c
- 1)} -
\frac{2\lambda_0 + S_c - S}{2\lambda_0 + S_c - 1}\sqrt{\Delta}\,,\\
& \leq & -\lambda_0 + 1 - S_c - \sqrt{\Delta} = \frac{1-c^2}{c^2} k_2\,,
\end{eqnarray*}
where we used $S\leq 1$ throughout the derivation. We can therefore
conclude that $\Omega$ is an invariant region
of the system \eqref{sys_pS2}. Noting that at the steady state
$(n, S) = (0, S_1)$ the unstable manifold has a positive
slope ($k_{1,2} = \mu_{2,3}c / S_*$), i.e. it points into
the region $\Omega$, and using the fact that $S$ is
strictly increasing inside $\Omega$ for $n>0$ we can conclude
that, for each $c\geq c^*$, there is a heteroclinic orbit starting
from $(0, S_1)$ and finishing at $(0,1)$, corresponding to a travelling
wave solution of the PDE system \eqref{sys_pS2}. \q
\end{proof}

\subsection{Case II: Increasing chemotaxis ($ 0 < \kappa < \infty$) }

Decreasing $\kappa$ corresponds to an increase in the chemotactic
sensitivity $\chi$ in the ODE system \eqref{ode1}--\eqref{ode2} and
the slope of trajectories in the $n-S$ plane is determined by
$$
\dd{n}{S} =  \frac{c^2 }{1-c^2} \frac{2n(S - S_c - \lambda_0) +
( S -S_c -2\lambda_0)f(S) }{Sn} + \frac{2\lambda_0\chi}{1-c^2} n\,.
$$
It is noted that the above slope is larger than that for the non-chemotaxis
case within the region of interest $n > 0$. Due to this increase the
region $\Omega$ for the proof of Theorem 1 is no longer invariant
for this system and a travelling wave solution to \eqref{sys_pS2} does
not necessarily exist for all $c\in(c^*, 1)$.
The $n$-nullcline for the full ODE system
\eqref{ode1}--\eqref{ode2} is given as the solution of the
quadratic equation
$$
 \frac{ 2\lambda_0\chi S}{c}n^2  +  2c (S-S_c-\lambda_0) n
 + c( S -S_c -2\lambda_0) f(S) = 0.
$$
For a given wave speed $c$, the $n$-nullcline can
therefore be calculated as
\[
n = \frac{c}{2\lambda_0\chi S} \left[c(\lambda_0 + S_c - S)
	\pm \sqrt{\Delta_2(S)} \right]\,,
\]
with	
\[
	\Delta_2(S) = c^2(\lambda_0+S_c-S)^2
	- 2\lambda_0\chi\,S(S-S_c-2\lambda_0)f(S)\,.
\]
We can see that $\Delta_2(S) \to -\infty$ as $S\to\infty$
due to its leading order term $-2\lambda_0\chi\,S^3$.
Therefore, as $S$ becomes large, no $n$-nullcline exists and
$n'$ is positive everywhere. Additionally, $\Delta_2(S)$ might
have further roots and, in particular, $\Delta_2(S)$ might be
negative in parts (or the whole) of region $S\in[S_1, 1]$. This
again means that $n$ is strictly growing in these parts of
the domain.

We detect three different types of behaviours of
trajectories starting close to $(n, S) = (0, S_1)$, plotted
in Figure~\ref{fig:trajectories}. In particular, we can
see each of these behavioural types for different values of
$\chi$ and despite different configurations of the nullclines.
In the top two plots of Figure~\ref{fig:trajectories}
we present the case of a diverging solution. Examining ODE \eqref{ode1},
we observe that for large $n$, $n$ grows quicker than $\mathcal{O}(n^2)$
and the divergence can be identified as a finite-time blow-up.
In the second case, depicted in the two plots in the middle of 
Figure~\ref{fig:trajectories},
the trajectory converges to the steady state $(0, 1)$, but does so
after entering the region $S>1$ and thereafter the
region $n<0$. Note that the steady state $(0,1)$ is still a stable node
in this case and that this overshoot is therefore not a spiralling effect.
Since these trajectories do not correspond to a non-negative solution of the
ODE system \eqref{ode1}--\eqref{ode2}, they do not represent travelling wave
solutions to the original problem. The last case, presented in the plots
on the bottom of Figure~\ref{fig:trajectories}, corresponds to an acceptable
solution and is characterised by the convergence to $(0, 1)$ without 
crossing the line $S=1$.
\begin{figure}[t]
\hskip 2cm
$\chi=1$, $c=0.5884$
\hskip 3.6cm
$\chi=0.3$, $c=0.3$ \hfill\break
\centerline{
\includegraphics[height=4.4cm]{./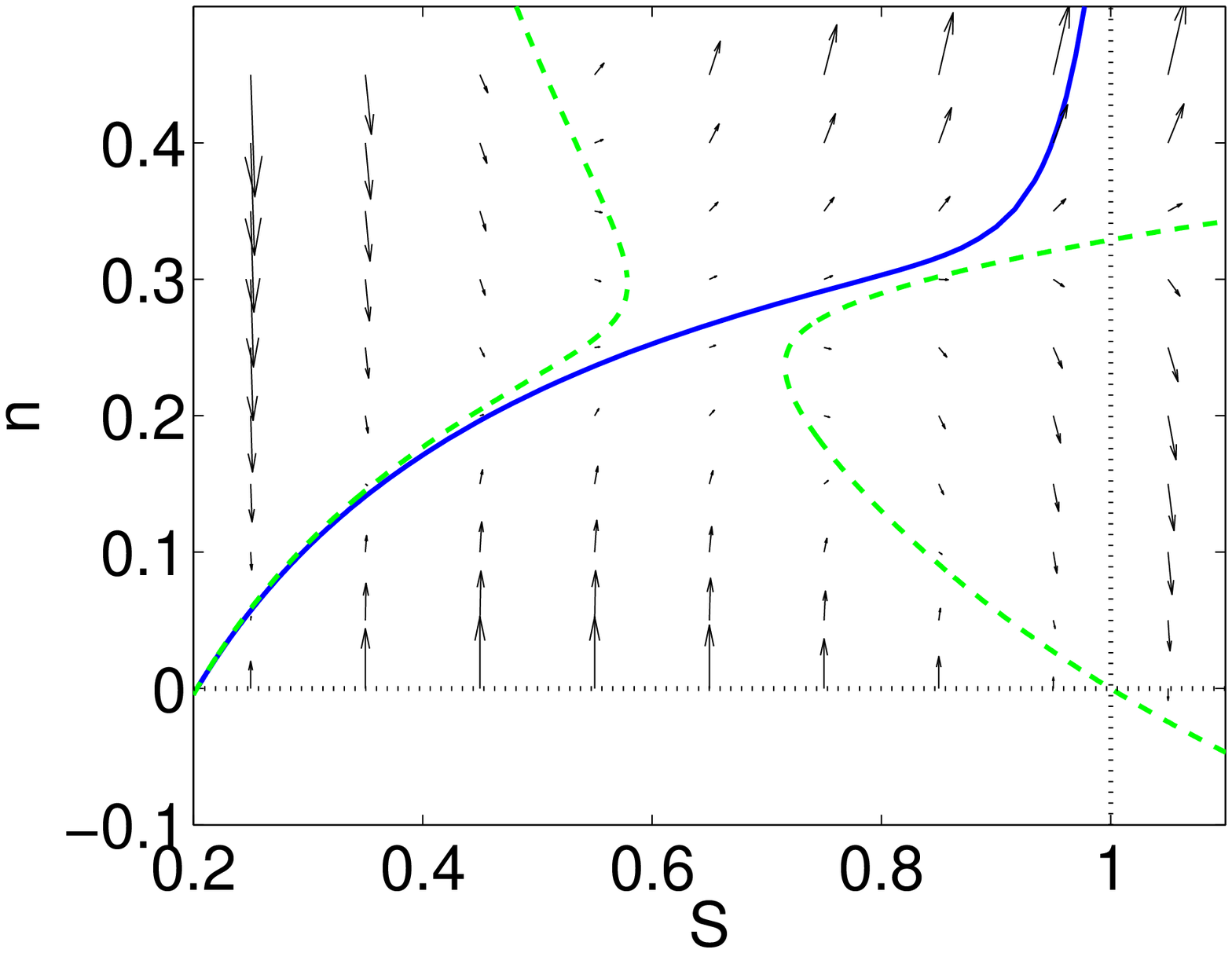}
\hskip 4mm
\includegraphics[height=4.4cm]{./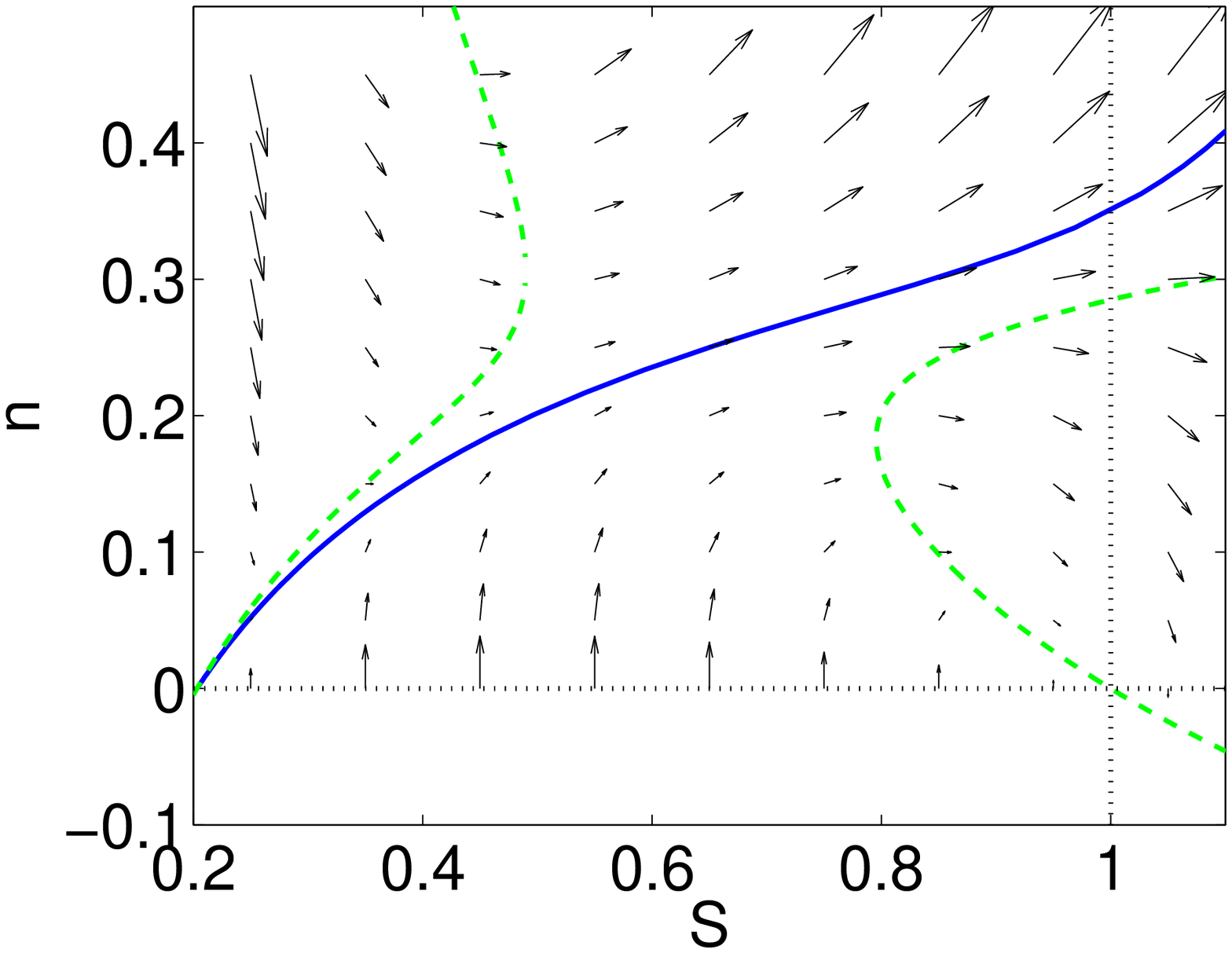}
}
\vskip 2mm
\hskip 2cm
$\chi=1$, $c=0.5885$
\hskip 3.6cm
$\chi=0.3$, $c=0.328$ \hfill\break
\centerline{
\includegraphics[height=4.4cm]{./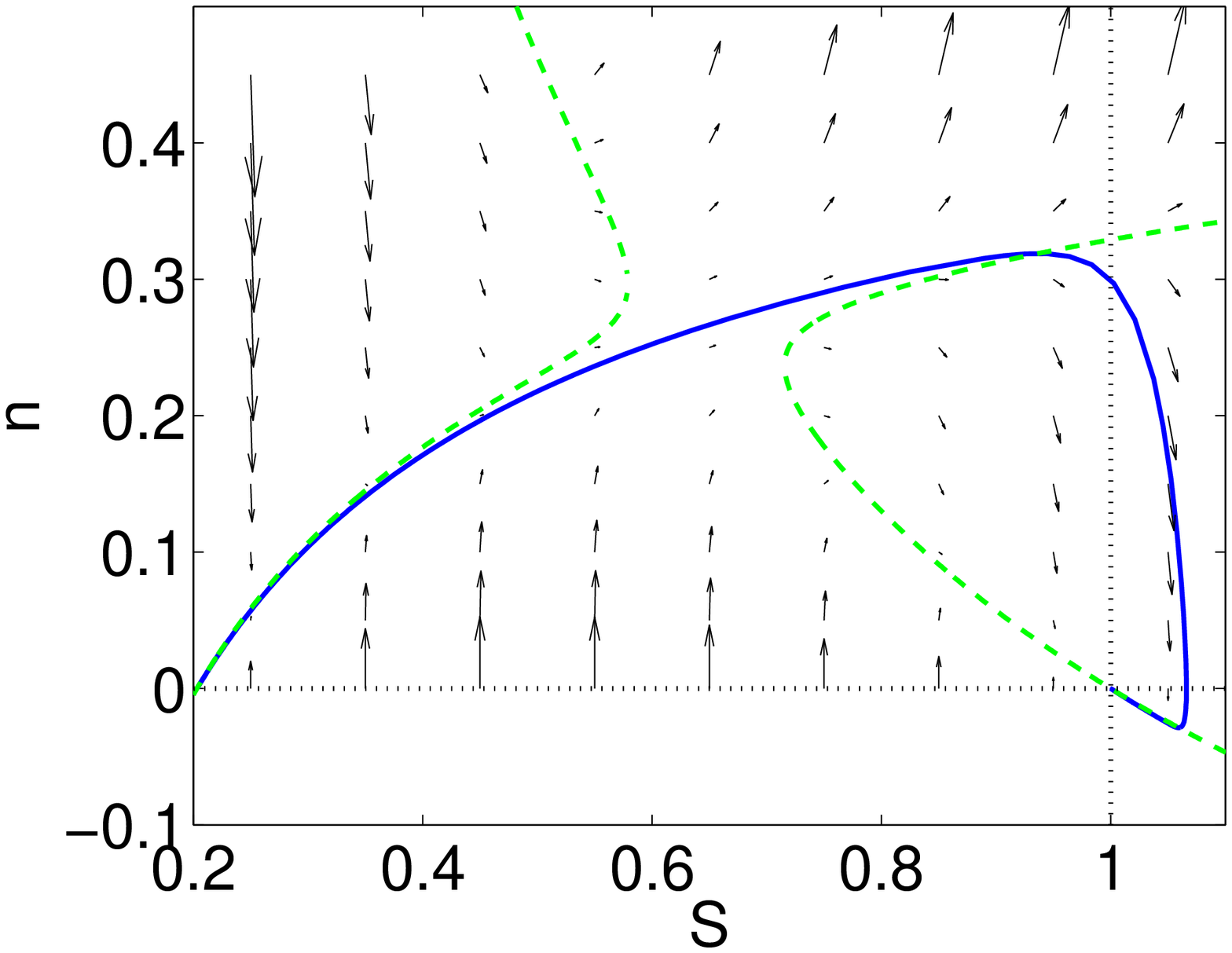}
\hskip 4mm
\includegraphics[height=4.4cm]{./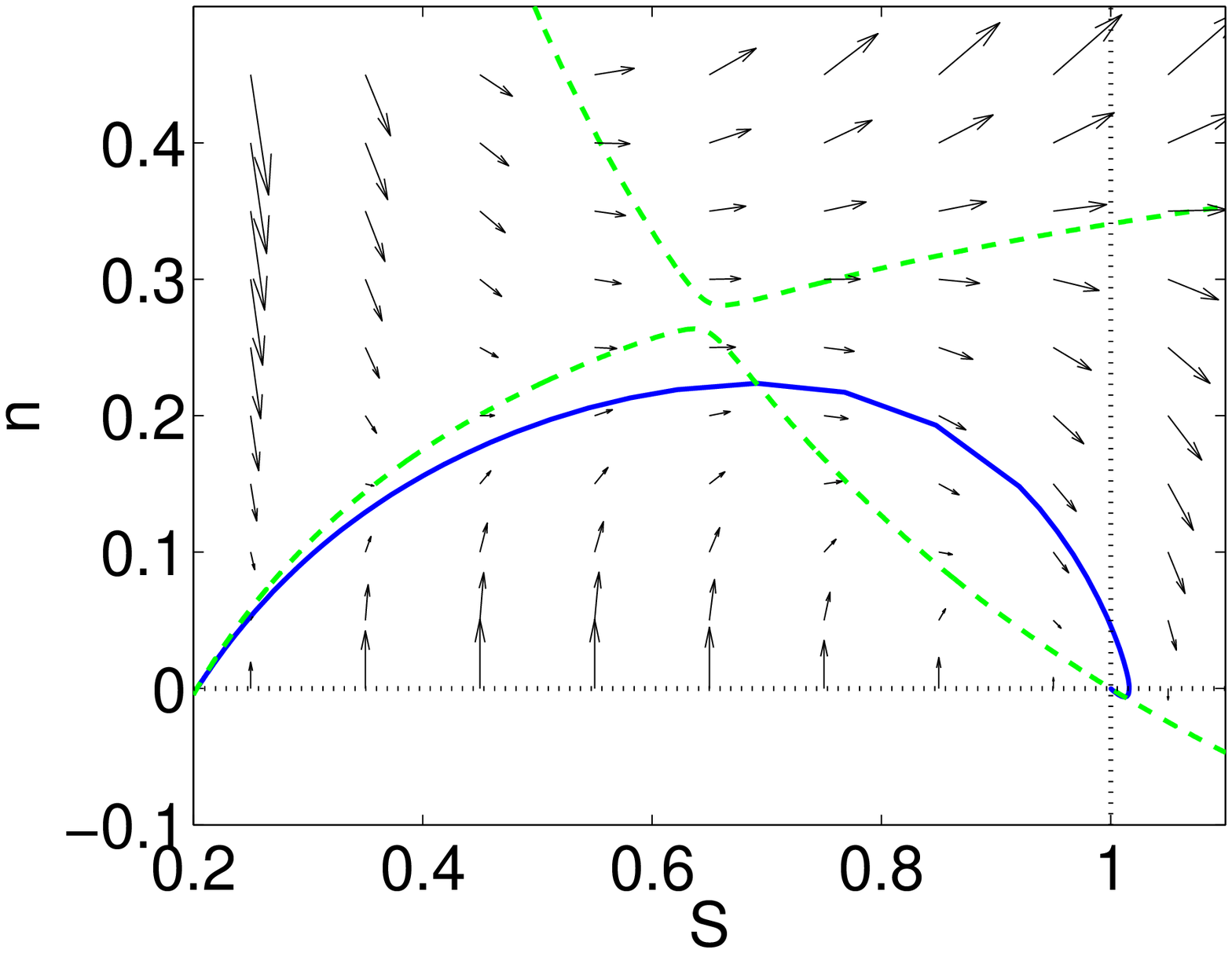}
}
\vskip 2mm
\hskip 2cm
$\chi=1$, $c=0.59$
\hskip 3.9cm
$\chi=0.3$, $c=0.35$ \hfill\break
\centerline{
\includegraphics[height=4.4cm]{./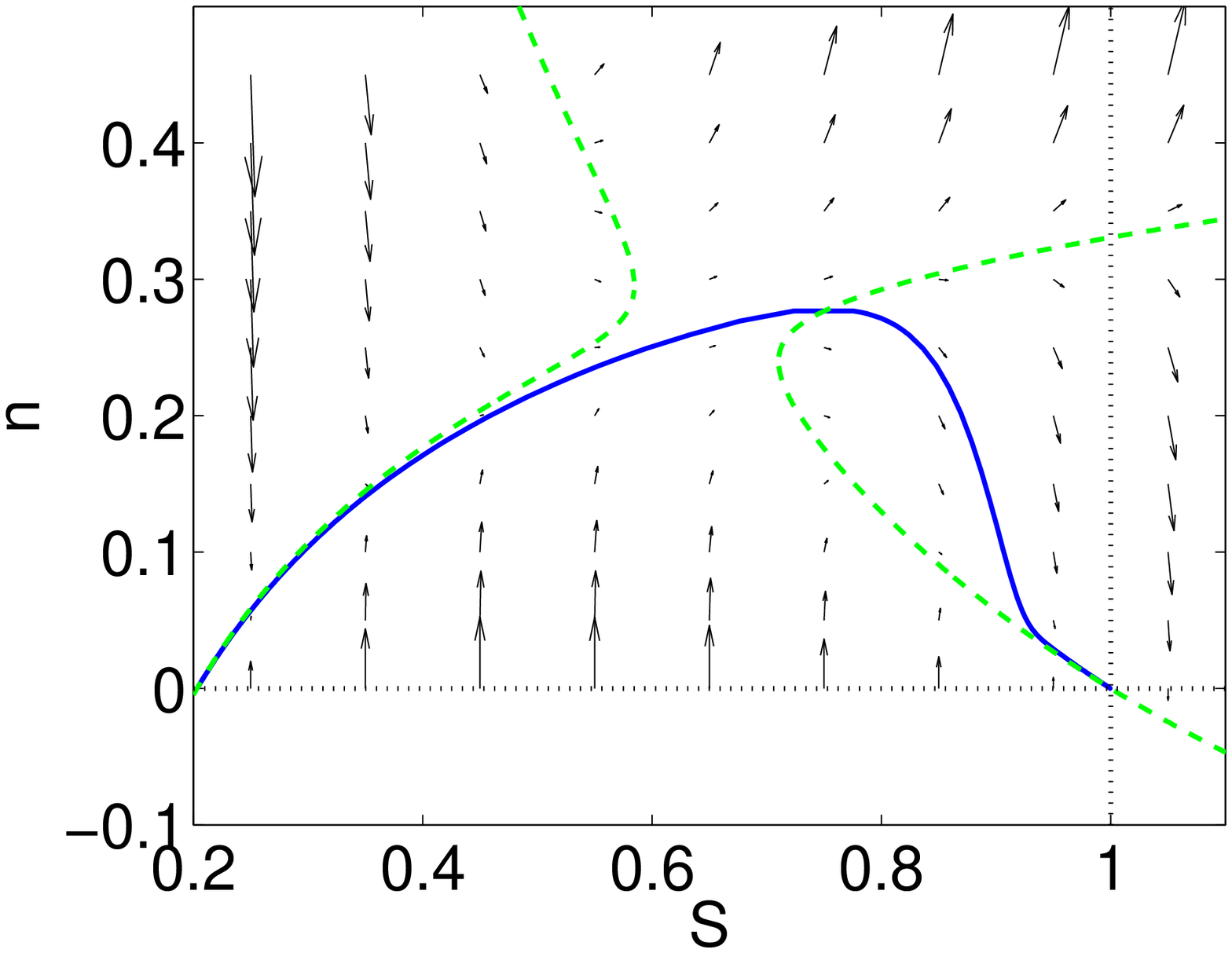}
\hskip 4mm
\includegraphics[height=4.4cm]{./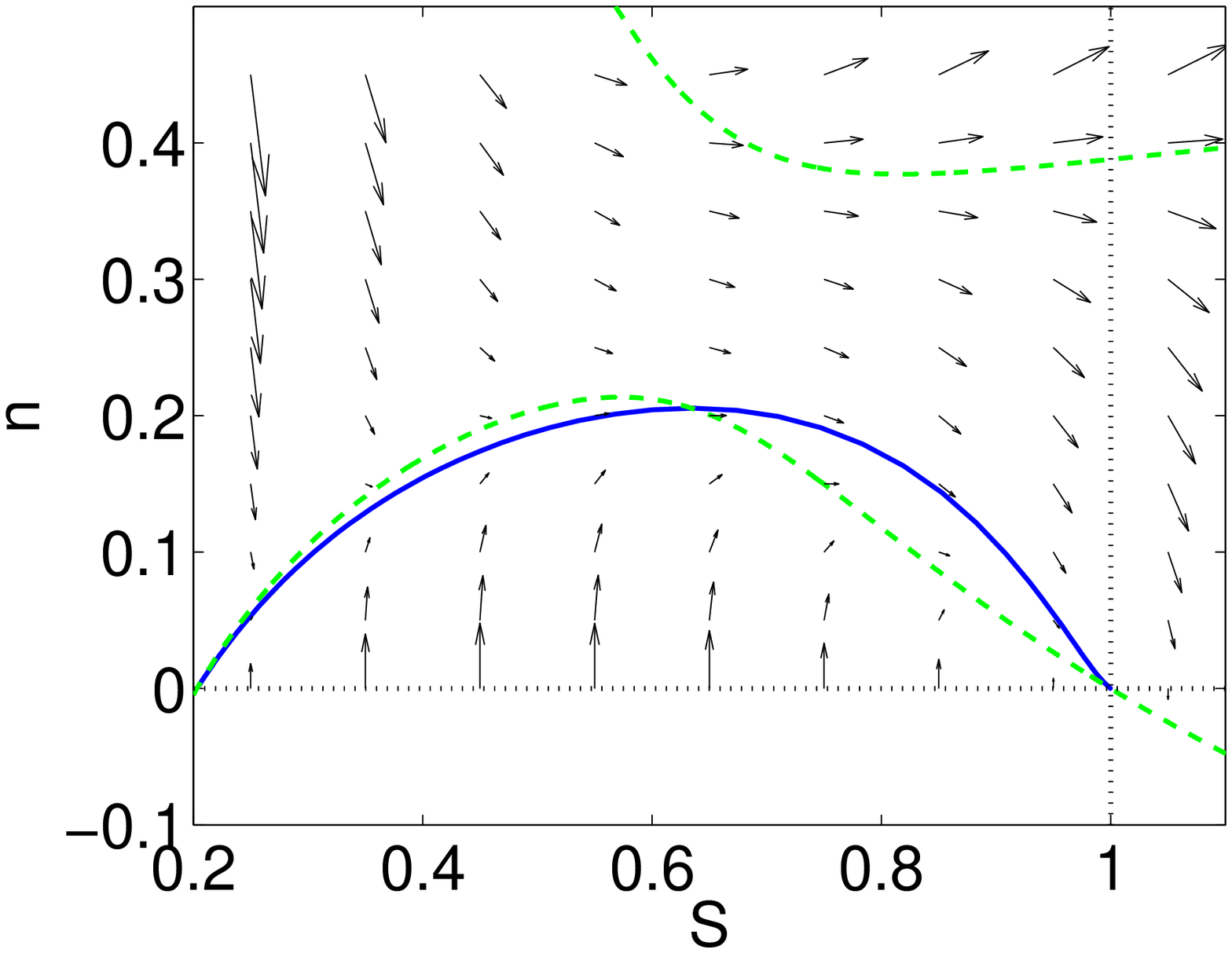}
}
\caption{{\it Trajectories of the ODE system $\eqref{ode1}$--$\eqref{ode2}$
that highlight the three different cases. Parameters in all
plots are $\lambda_0 = 10$, $S_c = 0.5$.
Solid line: trajectory, dashed line:
$n$-nullcline, dotted lines: $n=0$ and $S=1$.}}
\label{fig:trajectories}
\end{figure}


\subsection{Case III: Infinite chemotactic sensitivity ($\kappa = 0$)}

As $\kappa$ decreases further we observe that the minimal
wave speed necessary to allow a non-negative travelling
wave solution of \eqref{sys_pS2} increases. In the limit
$\kappa \to 0$, the ODE system \eqref{ode1}--\eqref{ode2} no longer
has convergent solutions. However, in this limit the linearisation
assumption leading to these ODEs and the system
\eqref{sys_pS2} is no longer valid and we must consider
the original turning kernel as defined in \eqref{eq:lamhybridred}.
In the limit $\kappa\to 0$ the turning rate in the hybrid model
therefore becomes
\begin{equation}
\lambda
=
\left\{ \begin{array}{rl} 0,& \qquad \mbox{for} \; y^{(1)} > 0, \\
2\lambda_0, &  \qquad \mbox{for} \; y^{(1)} < 0\,.
\end{array}\right.
\end{equation}
Hence, bacteria moving in a favourable direction do not turn, indicating 
that the
wave speed achieved in this limit should evolve to $c=s=1$.
In \cite{Xue:2011:TWH} it was shown, for a slightly different
turning kernel, that travelling waves can exist even without growth
terms and that their wave speed satisfies $c=s$.

\section{Computational analysis of the wave speed}
\label{sec:numerics}

In this section we computationally compare wave speeds from
the hybrid model with those of the fully continuous models. Specifically,
we investigate the regimes in which the latter provide an acceptable
insight into the travelling wave behaviour of the hybrid model,
and where they differ. We begin by investigating the non-chemotaxis
case, where the minimum wave speed $c^*$ for the continuum systems
was determined in (\ref{eq:cstar}). In Section~\ref{subsec:comp:chem}
we show how the wave speed depends on the value of $\kappa$, and
correspondingly the chemotactic sensitivity $\chi$ in the macroscopic model.
A comparison with hybrid models without cell proliferation is given 
in Section \ref{subsec:comp:nogrowth}. We conclude this section with 
a discussion into
the effect and origin of oscillations observed under increasing the adaptation
time $t_a$.

\subsection{Case I: No chemotaxis ($\kappa = \infty$)}
\label{subsec:comp:nochem}

In Section~\ref{subsec:analysis:nochem} we analysed
the macroscopic PDEs in the absence of chemotaxis.
Travelling wave solutions were shown to exist for all wave speeds
$c\in(c^*, 1)$, with $c^*$ determined by \eqref{eq:cstar}. In
Figure~\ref{fig:hybridSpeed:NoChem}(a), variation of
\eqref{eq:cstar} as a function of $\lambda_0$ is illustrated; we note
that wave speeds determined through simulation of the PDE systems
correspond exactly (to accuracy of the numerical approximation)
with the analytical wave speeds.
We now numerically investigate the wave speed for the case
$\chi=0$ in the hybrid model.

For our simulations we consider the same parameters and methods as
described in Section~\ref{subsec:hybrid:num}: specifically, we set the
system parameters $S_c = 0.5$, $s=1$ and $D_S = 0$. For the
computations we consider a time step $\Delta t = 10^{-3}$, a spatial
resolution of $\Delta x = 0.25$ on a domain with length $L=100$, and
simulate the system until the value of $S$ at $x=60$ falls below
$0.5$. The profiles at this time, together with the time when $S$ at 
$x=20$ falls
below $0.5$, are used to estimate the wave speed.

The measured wave speed for varying $\lambda_0$ is illustrated in
Figure~\ref{fig:hybridSpeed:NoChem}(a), along with $c^*$ as
predicted from the travelling wave analysis. While the relationship is
similar in shape, we note that at all values of $\lambda_0$ tested the
measured wave speed lies below the analytical value $c^*$. In the literature
it has been observed that inaccuracies in numerical schemes can lead 
to an increase in wave
speeds \cite{Reitz:1981:SNM}, therefore rendering
the lower wave speed seen in Figure~\ref{fig:hybridSpeed:NoChem}(a) 
as counter intuitive.

Nevertheless, we can provide the following explanation for the
distinct values in the continuum and hybrid models. For the zero-chemotaxis
case, wave generation and movement is solely determined by growth ahead
and death behind the wave. In the continuum model an outermost ``fractional 
bacteria population''
can extend significantly beyond the wave front, since some proportion of the
initial population never turns left, and hence  far into the region where 
$S$ is very close to its initial
value of $1$. Yet this fractional population still grows exponentially
($\partial p^\pm/\partial t \approx (1-S_c)p^\pm$), seeding the growth and
expansion of the population. The finite/discrete nature of the hybrid 
model precludes
any fractional bacterium: the forward ``tail'' is necessarily
finite and growth will not occur beyond the outermost individual.

For the above explanation to hold we would expect a dependence of the
measured wave speed on the initial number of bacteria $N_0$: continuous
densities provide a closer approximation under larger numbers of bacteria
and we would expect convergence in the wave speed to $c^*$. Simulations in
Figure~\ref{fig:hybridSpeed:NoChem}(b) demonstrate this property, corroborating
our interpretation.

\begin{figure}[t]
\centerline{\raise 4.3cm
\hbox{(a)}
\hskip -4mm
\includegraphics[height=4.2cm]{./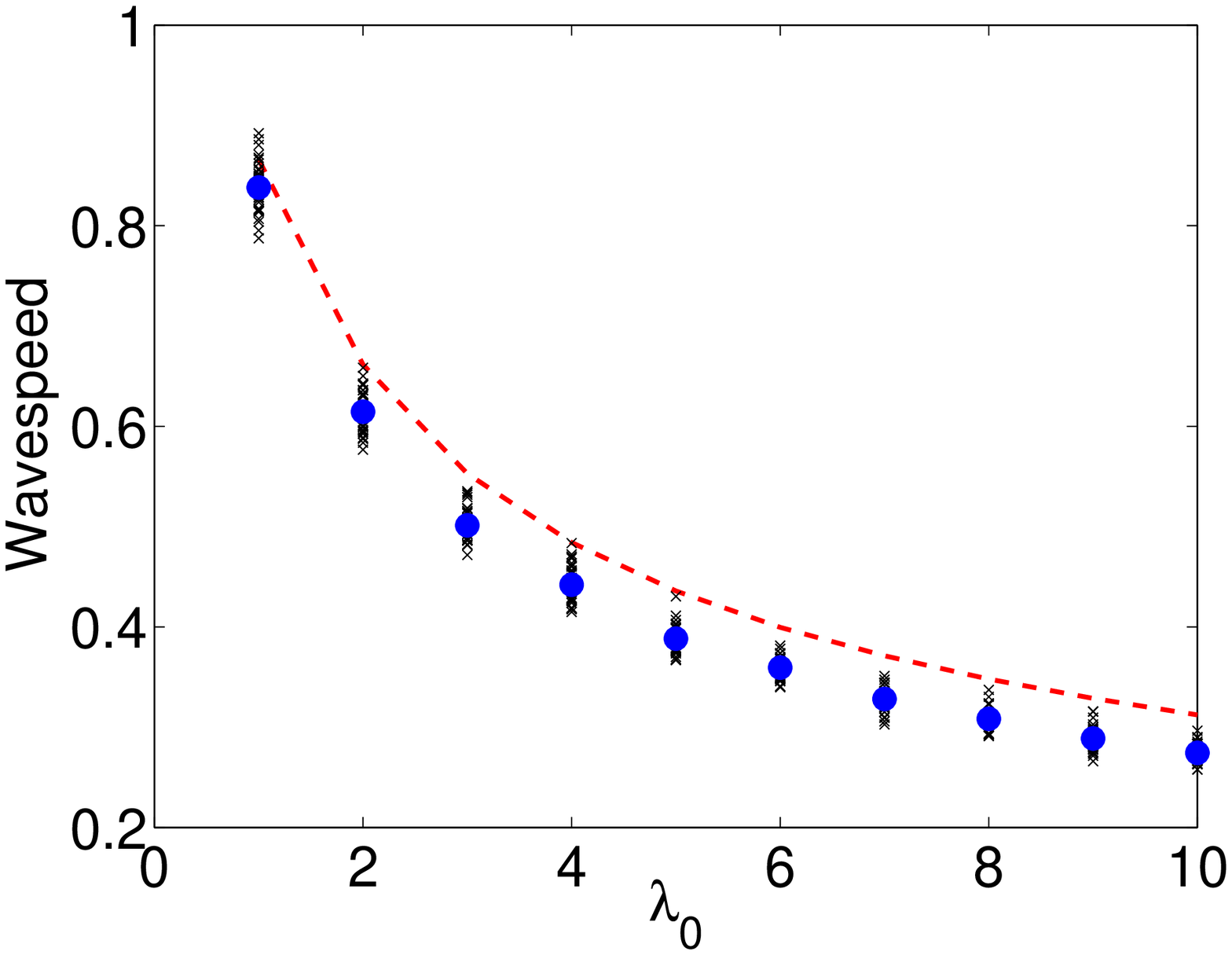}
\hskip 4mm
\raise 4.3cm \hbox{(b)}
\hskip -4mm
\includegraphics[height=4.2cm]{./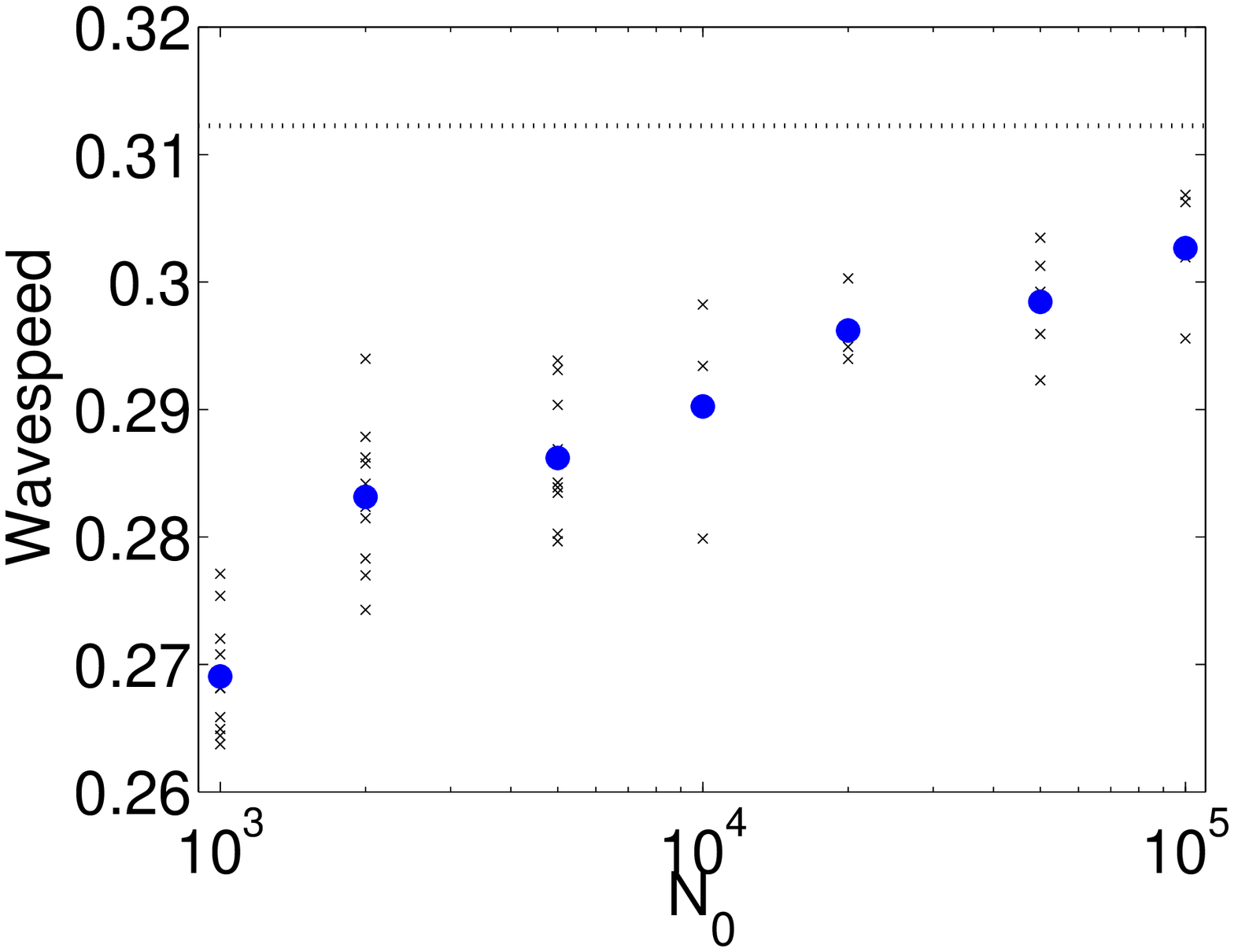}
}
\caption{{\it Measured wave speed in the hybrid model.
	Crosses: individual simulations, dots: ensemble averages.
	Parameters are as described in the text.} \hfill\break
(a) {\it Wave speed in dependence of $\lambda_0$ for $N_0 = 10,000$.
Dashed line: $c^*$ given by $(\ref{eq:cstar})$.} \hfill\break
(b) {\it Wave speed as a function of
$N_0$ with $\lambda_0 = 10$. Dashed line: $c^*$
computed by $(\ref{eq:cstar})$.}
}
\label{fig:hybridSpeed:NoChem}
\end{figure}

\subsection{  Case II: Increasing chemotaxis ($ 0 < \kappa < \infty$) }
\label{subsec:comp:chem}

In the second set of numerical experiments we measure the dependency of the
wave speed on the critical parameter $\kappa$, i.e. we determine the effect
of increasing chemotaxis as $\kappa$ decreases. We compare the results
measured for the hybrid system with the continuous Systems \textbf{(A)} 
and \textbf{(B)}.

We use the same parameters as in Section~\ref{subsec:comp:nochem} and
results are shown in Figure~\ref{fig:hybridSpeed:Chem}. The results
demonstrate the regimes where correspondence across the varying modelling
levels occurs: while the hybrid model (dotted line) corresponds well with its
closest continuous version (mesoscopic System \textbf{(A)}, red solid line) 
over
a wide range of $\kappa$, it only corresponds with System \textbf{(B)} 
(black dashed line)
for larger $\kappa$, diverging as $\kappa$ decreases. Note that the
turning rate \eqref{eq:lamcont} used for System \textbf{(B)} becomes
negative at small values of $\kappa$ and we limit the range of $\kappa$ 
studied accordingly.

At larger $\kappa$ all three models converge to a value close to $c^*$ 
as $\kappa$ grows:
in this regime the main assumption proposed for the linearisation
($|S(x) - y^{(2)}| \ll \kappa$) holds and we obtain good quantitative 
agreement. While this
assumption becomes less acceptable as we decrease $\kappa$, leading to 
the divergent
behaviour described above, we note that all models show the same qualitative 
agreement:
increasing chemotactic responses leads to an increase in the wave speed.
Note that the results for System \textbf{(B)} can be identically
replicated using the ODE system \eqref{ode1}--\eqref{ode2} and
a search algorithm for the smallest value of $c$ that admits a nonnegative 
solution to the system.

These numerical experiments demonstrate that chemotaxis has a
significant effect on the speed of movement and that the waves
cannot solely be explained by growth and death terms. Rather, we
interpret birth and death processes as stabilisers to what would 
otherwise be
transient waves \cite{Franz:2012:HMI,Xue:2011:TWH}. This interpretation 
is in agreement with the
results presented in Figure~\ref{fig:nogrowth}, as the initial wave speed 
for the system without growth
seems to be similar to the wave speed of the system including growth 
and death terms.

\begin{figure}[t]
\centerline{
\includegraphics[height=6cm]{./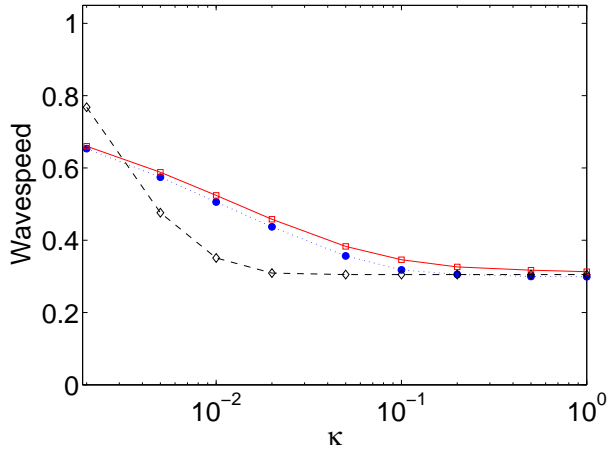}
}
\caption{{\it Comparison between wave speeds of the
	various models in dependence of $\kappa$.
	Dotted line: hybrid model, red solid line: mesoscopic
	System~\textbf{(A)}, dashed line: linearised System~\textbf{(B)}.
	Parameters are as described in the text.}
}
\label{fig:hybridSpeed:Chem}
\end{figure}

\subsection{Oscillations in the wave speed}
\label{subsec:comp:oscillations}
An additional observation we made during the numerical
experiments of the hybrid model is that for increasing values of the
adaptation time $t_a$, the wave speed starts to differ strongly
from the mesoscopic System \textbf{(A)}, an effect that we identified to be due
to oscillations in the behaviour of the wave. In
Figure~\ref{fig:oscillations}(a) we present an example of strongly
oscillating wave speeds (where the wave speed is
measured as rate of change of the average position of bacteria).
This example occurred for the parameters $S_c = 0.5$, $\lambda_0 = 10$,
$\kappa = 0.001$ and $t_a = 4$. We can also clearly see that the
wave speed is correlated to the current number of agents in the system.
In the literature similar effects of oscillating waves in stochastic models
have been observed \cite{Metcalf:1994:OWF,Othmer:1998:OCS}.

In Figure~\ref{fig:oscillations}(b), we present the form of the wave
at different times throughout the simulation. It is clearly visible that
the shape differs significantly at different times. One reason these 
oscillations
occur when $t_a$ is very high is that
a bacterium that happens to be in front of the wave experiences
a very high value of $S$, whilst its internal dynamics only adapt
very slowly. This, in combination with the low value of $\kappa$,
leads to a bacterium that does not switch direction for a long time
and will proliferate at a high rate. This implies that a spike of
bacteria forms in front of the wave that moves faster than the rest of
the wave. We can clearly see such a spike in the left-most waveform in
Figure~\ref{fig:oscillations}(b). Once the frontrunning bacterium and 
its copies
have turned, the wave goes into a reordering phase (second and third 
waveform),
until, eventually, a new spike emerges (4th waveform).

In Figure~\ref{fig:oscillations}(c) we plot the wave speed over time for 
a smaller
value of $t_a$. We can see that the oscillations are less severe and more 
frequent
than in Figure~\ref{fig:oscillations}(a), which is in agreement with the 
explanation
above. As we decrease $t_a$ the frontrunning bacteria will adapt quicker to
their surroundings and are thereby more likely to turn. We show the influence
of changing $N_0$ on the oscillating behaviour in 
Figure~\ref{fig:oscillations}(d).
The oscillations seem to occur with a similar frequency but more regular 
to those before,
which can be explained by the increased likelihood of frontrunning bacteria 
with a higher number of agents and reduced noise in the system.

\begin{figure}
\centerline{\raise 4.3cm
\hbox{(a)}
\hskip -4mm
\includegraphics[height=4.2cm]{./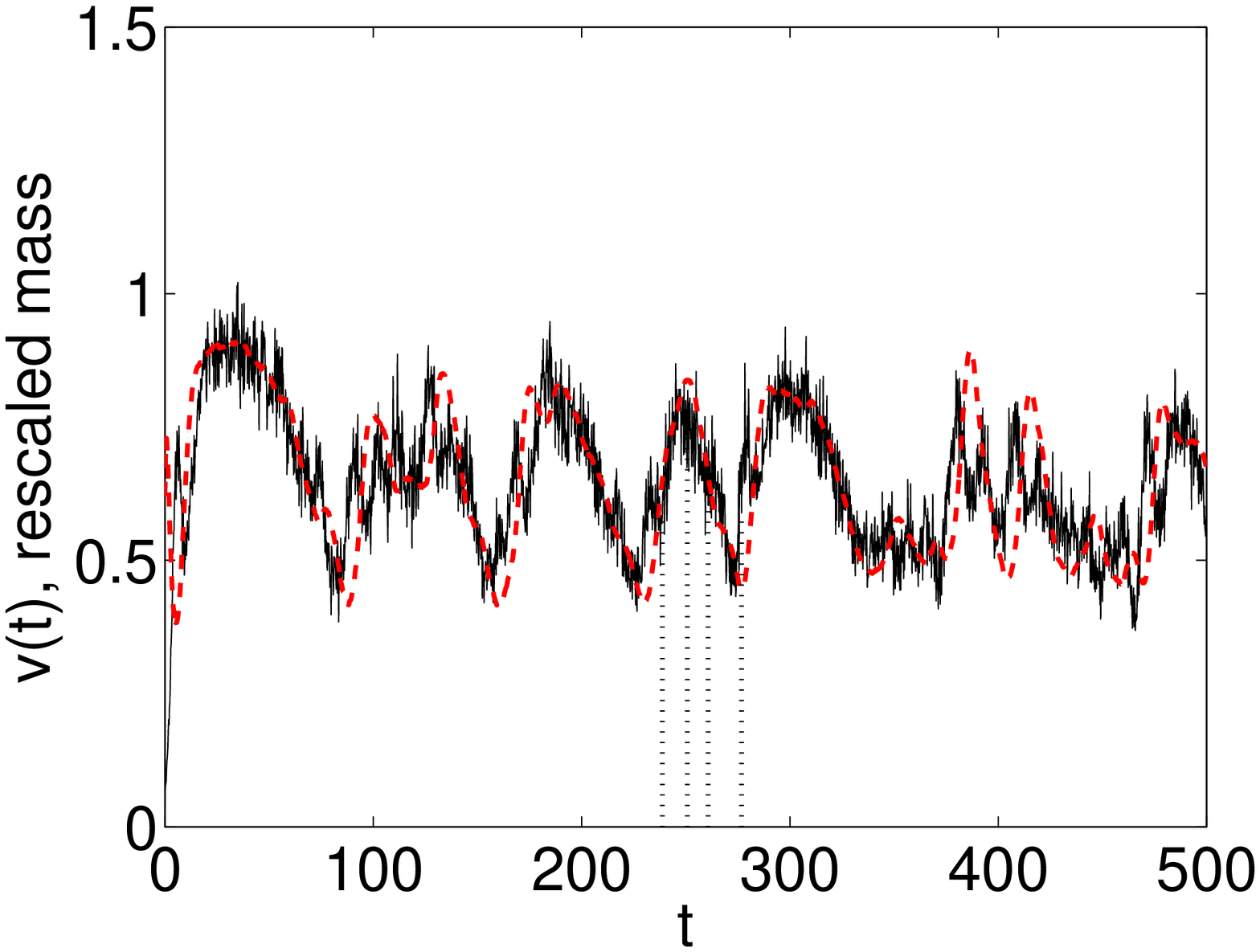}
\hskip 4mm
\raise 4.3cm \hbox{(b)}
\hskip -4mm
\includegraphics[height=4.2cm]{./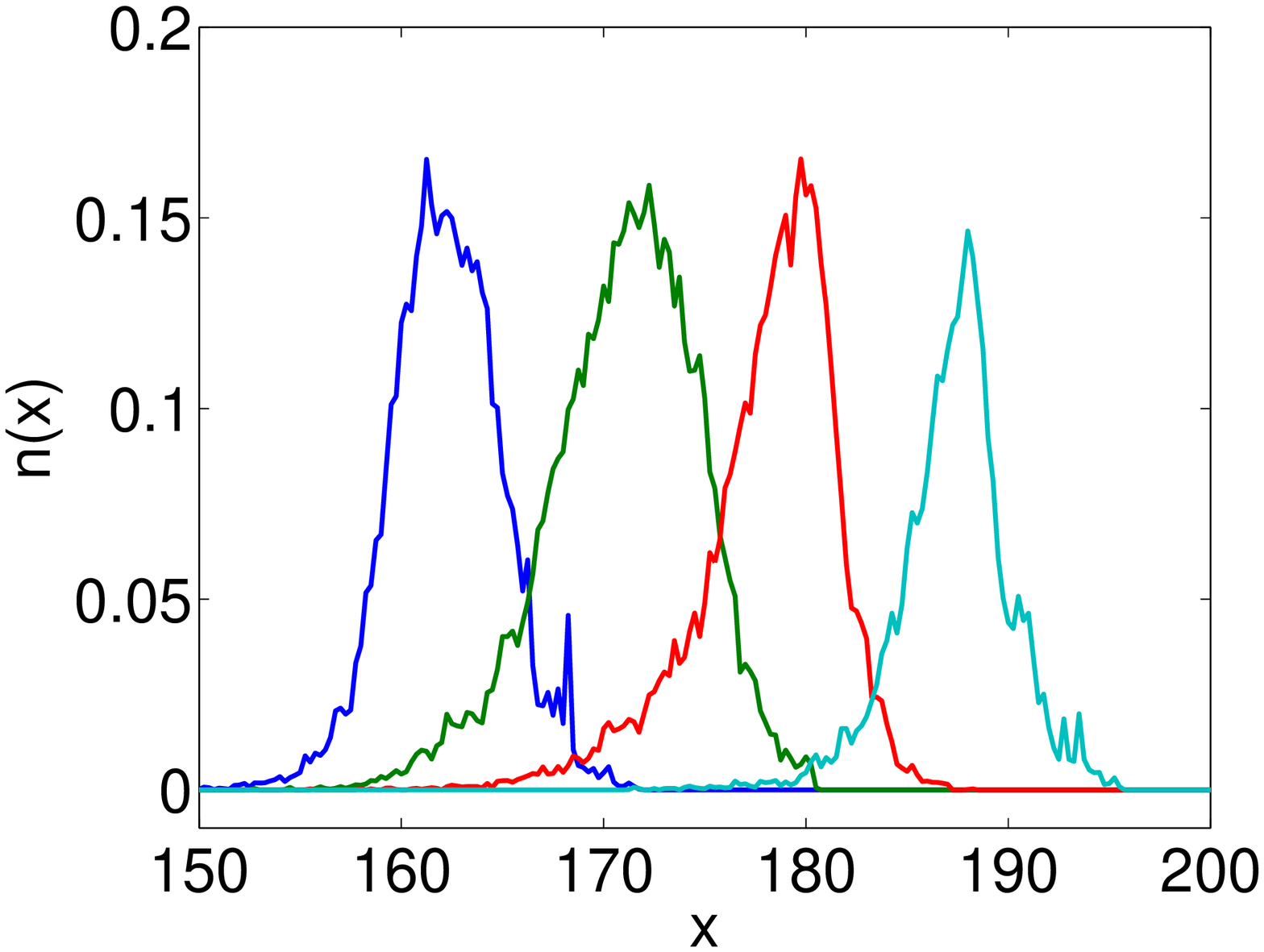}
}
\centerline{\raise 4.3cm
\hbox{(c)}
\hskip -4mm
\includegraphics[height=4.2cm]{./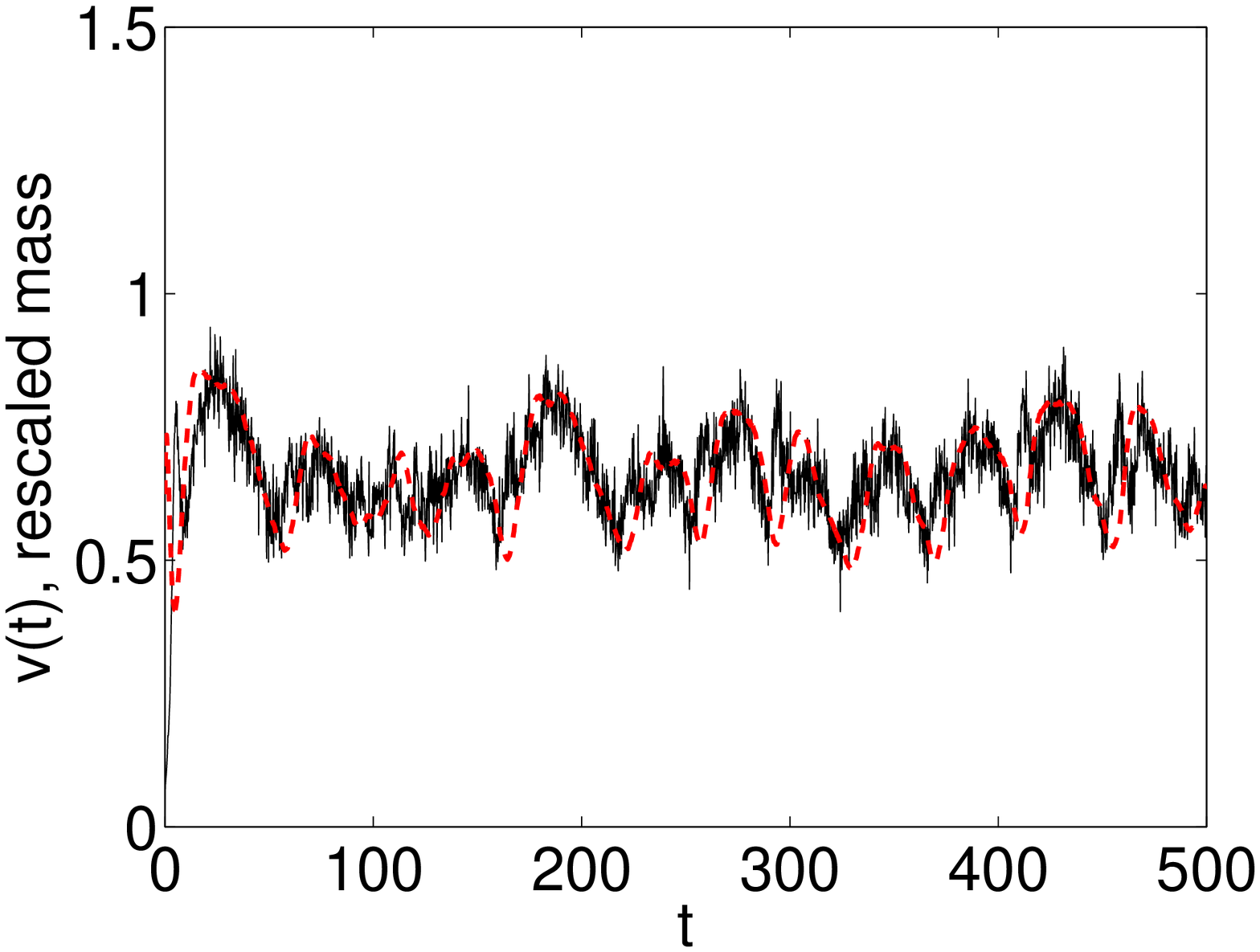}
\hskip 4mm
\raise 4.3cm \hbox{(d)}
\hskip -4mm
\includegraphics[height=4.2cm]{./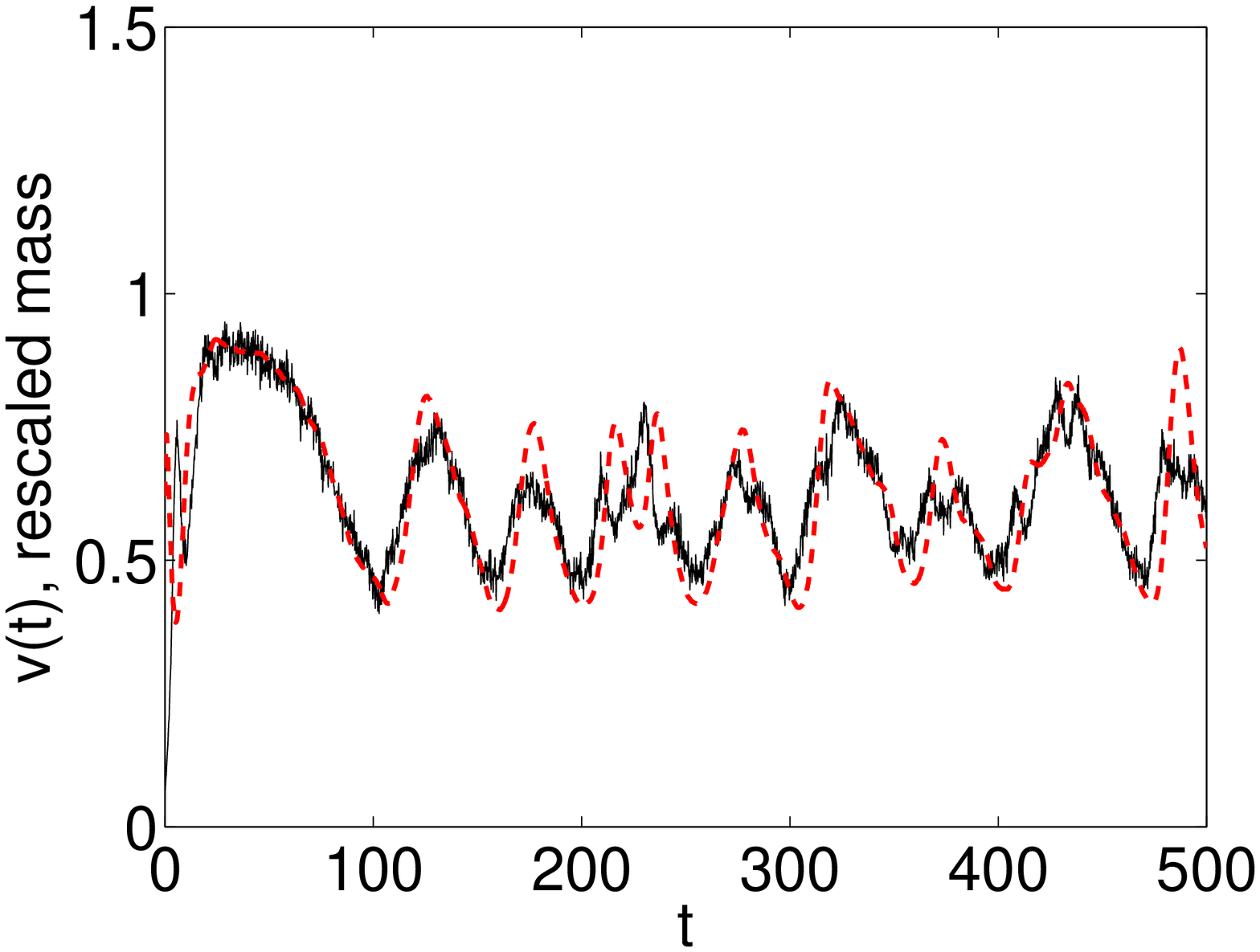}
}

\caption{{\it Oscillations in the wave speed
of the hybrid model $(\ref{eq:hybridmodel:x})$--$(\ref{eq:hybridmodel:y2})$
and $(\ref{eq:SforwardEuler})$.} \hfill\break
(a) {\it Wave speed in comparison to current number of particles 
for $N_0 = 10,000$, $t_a = 4$.
Solid line: wave speed, dashed line: number of particles, dotted 
lines: times of wave forms shown in panel (b).}\hfill\break
(b) {\it Waveform at 4 distinct times marked in panel (a) 
from left to right.} \hfill\break
(c) {\it As in (a) with $N_0 = 10,000$, $t_a = 2$.}\hfill\break
(d) {\it As in (a) with $N_0 = 50,000$, $t_a = 4$.}\hfill\break
{\it Other parameters are given in Section \ref{subsec:comp:oscillations}.}
}
\label{fig:oscillations}
\end{figure}

\section{Discussion}
\label{sec:discussion}

In this paper we presented a hybrid model of chemotaxis, incorporating
a biologically realistic turning kernel introduced in
\cite{Xue:2011:TWH}. We analysed the travelling wave behaviour of this
hybrid system using mesoscopic and macroscopic equations, deriving an analytical
value for the expected wave speed in the case of no chemotaxis. As chemotaxis
increases we demonstrated (analytically and numerically) that the expected
wave speed increases, indicating that the wave that forms is not solely driven by
growth and death processes. In contrast to the transient waves observed for the
hybrid model in the absence of growth and death terms \cite{Franz:2012:HMI},
the (numerical) waves observed here in their presence are stable, indicating
the stabilising effect of birth and death. The numerical analysis
reveals that the macroscopic equations derived through linearisation of the turning
kernel can qualitatively describe the change in wave speed as chemotaxis
increases, but that there are significant quantitative differences between the two
systems. Additionally, we observed oscillations in the wave movement, an effect that
had been seen in similar systems in the literature \cite{Othmer:1998:OCS} and that
cannot be explained using mean-field approximations. 

To date, travelling waves in chemotaxis models have mainly been
analysed from the perspective of macroscopic PDE models of chemotaxis
\cite{Horstmann:2004:UPK,Hillen:2009:UGP}. The existence of
travelling waves for continuum models with growth terms is well established
\cite{Satnoianu:2001:TWN,Nadin:2008:TWK,Landman:2003:CCM}. While
hybrid models have been used to study pattern formation in bacterial chemotaxis
\cite{Guo:2008:HAB,Xue:2011:RSS}, these studies have not analysed the 
travelling wave
patterns observed in bacterial cell populations.

Recently, experimental studies using microfluidic techniques tracked cell trajectories within 
a traveling pulse, and revealed that persistence of direction in cell movement accounts for 
30\% of the macroscopic speed of the traveling pulse \cite{Saragosti:2011:DPC}. The hybrid model framework 
studied here provides a natural method for direct comparison of model predictions with experimental measurements of cell 
trajectory, and 
this is left as future work.

\section*{Acknowledgements}

The research leading to these results has received funding from
the European Research Council under the {\it European Community's}
Seventh Framework Programme {\it (FP7/2007-2013)} /
ERC {\it grant agreement} No. 239870. This publication
was based on work supported in part by Award No KUK-C1-013-04, made by King
Abdullah University of Science and Technology (KAUST).
Radek Erban would also like to thank the Royal Society for
a University Research Fellowship; Brasenose College, University of Oxford,
for a Nicholas Kurti Junior Fellowship; and the Leverhulme Trust for
a Philip Leverhulme Prize. This prize money
was used to support research visits of Chuan Xue and Kevin Painter
in Oxford. Kevin Painter acknowledges a Leverhulme Trust Research
Fellowship award (RF-2011-045).

\end{document}